%% file: main.tex
\documentclass{article}

\PassOptionsToPackage{authoryear,square,comma}{natbib}
\usepackage[main,final]{neurips_2026}

\usepackage[utf8]{inputenc}
\usepackage[T1]{fontenc}
\usepackage{microtype}
\usepackage{graphicx}
\graphicspath{{./}}
\usepackage{subcaption}
\usepackage{wrapfig}
\usepackage{booktabs}
\usepackage{multirow}
\usepackage{makecell}
\usepackage{hyperref}
\hypersetup{hidelinks,pdfauthor={Hankyeong Ko, Taehoon Kim}}
\usepackage{url}
\usepackage{siunitx}
\usepackage{float}
\usepackage{multicol}
\usepackage{tikz}
\usetikzlibrary{arrows.meta,positioning}
\usepackage{amsmath}
\usepackage{amssymb}
\usepackage{amsfonts}
\usepackage{mathtools}
\usepackage{amsthm}
\usepackage{nicefrac}
\usepackage{xcolor}
\usepackage[capitalize,noabbrev]{cleveref}

\newcommand{\ppmarkalgorithmwrapwidth}{0.50\linewidth}
\floatstyle{ruled}
\newfloat{algorithm}{tbp}{loa}
\floatname{algorithm}{Algorithm}
\newcounter{ALGline}
\newenvironment{algorithmic}[1][0]{%
  \setcounter{ALGline}{0}%
  \begin{list}{}{%
    \setlength{\leftmargin}{2.2em}%
    \setlength{\labelwidth}{1.7em}%
    \setlength{\labelsep}{0.4em}%
    \setlength{\itemsep}{0.2ex}%
    \setlength{\parsep}{0pt}%
    \setlength{\topsep}{0.2ex}%
  }%
}{%
  \end{list}%
}
\newcommand{\STATE}{\refstepcounter{ALGline}\item[\footnotesize\arabic{ALGline}.]}
\newcommand{\FOR}[1]{\STATE \textbf{for} #1 \textbf{do}\begin{list}{}{\setlength{\leftmargin}{1.4em}\setlength{\itemsep}{0.2ex}\setlength{\parsep}{0pt}\setlength{\topsep}{0.2ex}}}
\newcommand{\ENDFOR}{\end{list}}
\newcommand{\COMMENT}[1]{\hfill\(\triangleright\)~#1}

\theoremstyle{plain}
\newtheorem{theorem}{Theorem}[section]

\theoremstyle{definition}

\newtheorem{assumption}[theorem]{Assumption}
\theoremstyle{remark}

\newcommand{\ppmarkdoublefigurewidth}{0.96\textwidth}
\newcommand{\ppmarkalgorithmsize}{\footnotesize}
\newcommand{\ppmarkthreattablewrapwidth}{0.50\linewidth}

\newcommand{\ppmarkablationtablewrapwidth}{0.50\linewidth}

\newcommand{\ppmarksingletablesetup}{%
  \small
  \setlength{\tabcolsep}{5pt}%
  \renewcommand{\arraystretch}{1.05}%
}

\newcommand{\ppmarktighttablesetup}{%
  \footnotesize
  \setlength{\tabcolsep}{2pt}%
  \renewcommand{\arraystretch}{1.00}%
}

\AtBeginDocument{\let\cite\citep}

\title{PP-Mark: Provable and Publicly Verifiable Watermarking for Generative AI}
\author{%
  Hankyeong Ko \qquad Taehoon Kim\thanks{Corresponding author.}\\
  Graduate School of Virtual Convergence\\
  Sogang University\\
  \texttt{hko920920@sogang.ac.kr}\quad\texttt{taehoonkim@sogang.ac.kr}
}

\begin{document}

\maketitle

\begin{abstract}
Generative AI models now produce images indistinguishable from real data, making publicly verifiable provenance essential; however, existing watermarking methods become vulnerable to forgeries and adversarial optimization attacks once their detectors are made public. We propose PP-Mark, a provenance framework that anchors lightweight statistical detection to a zero-knowledge proof of the embedding process, ensuring that forged content cannot pass full verification by construction. We establish formal unforgeability under standard cryptographic assumptions and validate PP-Mark against representative baselines across two latent-diffusion models (SD2.1 and SDXL), demonstrating resistance to black-box imprint forgery, white-box optimization, and regeneration attacks while remaining practical for decentralized verification. Code is available at \url{https://github.com/hko920920/PP-Mark_NeurIPS2026}
\end{abstract}

\section{Introduction}
The rapid proliferation of generative AI has made synthetic content increasingly indistinguishable from human-authored media, facilitating deepfakes, coordinated disinformation, and large-scale digital forgery \cite{chesney_citron2019}. This has created urgent demand for verifiable provenance that can be independently audited by the public \cite{longpre2024dataauth}. Industry-wide initiatives such as C2PA and regulatory frameworks including the EU AI Act now mandate explicit identification of AI-generated content \cite{c2pa_spec_2_3}, but the effectiveness of these standards depends on the robustness of the underlying technical mechanisms.

Digital watermarking is the leading approach for establishing provenance of generative content. Post-hoc methods embed signals in the pixel or frequency domain after generation (e.g., \cite{zhu2018hidden}), but remain vulnerable to regeneration and model-based attacks that can remove the embedded signal \cite{zhao2024removable}. Semantic watermarking mitigates this by injecting structured signals directly into the latent space during the generative process \cite{wen2023treerings,fernandez2023stablesignature,zhang2024zodiac}, exhibiting improved robustness against low-level manipulations.

Despite these advances, existing schemes exhibit a critical vulnerability. First, black-box imprint forgery attacks can extract watermark patterns from a single reference image and transfer them onto forged images by minimizing latent-space distances \cite{muller2025blackbox}. Second, when detectors $D(\cdot)$ are made public, they function as a differentiable or queryable loss, enabling adversaries to optimize forged content to satisfy the verification criterion directly \cite{jiang2023evading}. Public exposure of detection components can further enable surrogate-based optimization attacks that exceed conventional threat models \cite{lin2025crack}.

To mitigate these risks, existing systems rely on centralized verification, keeping $D(\cdot)$ private behind permissioned APIs. While this prevents leakage of the detection rule, it creates a trust bottleneck that precludes independent public auditing. The field thus faces a fundamental tension: private verification sacrifices public auditability, while public verification remains vulnerable to adversarial optimization. Recent work on privacy-preserving verification reflects this same tension \cite{guo2024zeromark}.

We resolve this tension by introducing PP-Mark (Provable Presence Mark), a provenance framework that is both publicly verifiable and resistant to forgery. Our key insight is to reformulate verification from a purely statistical inference problem into a cryptographically grounded proof. Rather than relying solely on a public detection score, which can be optimized against, PP-Mark anchors the verification rule to the specific generation context using zero-knowledge proofs (ZKP), ensuring that successful verification requires cryptographic consistency rather than detector manipulation.

We summarize our contributions as follows.
\begin{itemize}
    \item We introduce a context-bound latent embedding scheme and a zero-knowledge verification framework (based on SP1) that cryptographically anchors the detection rule to the generation context, enabling decentralized public verification without centralized APIs while preventing adversarial optimization for forgery.
    \item We prove formal unforgeability under standard cryptographic assumptions (Theorem~\ref{thm:unforgeability}) and evaluate against representative baselines across two latent-diffusion models.
\end{itemize}

\section{Related Works} 
\subsection{Watermarking Methods and Forgery Vulnerabilities}
Post-hoc methods such as HiDDeN \cite{zhu2018hidden}, StegaStamp \cite{tancik2020stegastamp}, TrustMark \cite{bui2023trustmark}, and InvisMark \cite{xu2024invismark} embed signals after generation for robust detection. Semantic approaches embed during generation: Tree-Ring \cite{wen2023treerings} and Gaussian Shading \cite{yang2024gaussianshading} modulate initial latent noise, while Stable Signature \cite{fernandez2023stablesignature}, DiffuseTrace \cite{lei2024diffusetrace}, and SFW \cite{lee2025iccv_sfw} strengthen resilience through decoder- or process-level injection. In the identification setting, RingID \cite{ci2024ringid} extends Tree-Ring to multi-key separability, WIND \cite{arabi2025wind} proposes a two-stage public detection pipeline, and MaXsive and TraceMark-LDM \cite{luo2025tracemarkldm,mao2025maxsive} push capacity. ISTS~\cite{zhong2024ists} strengthens forgery resistance through instance-specific injection and two-sided detection; PP-Mark pursues a complementary axis by cryptographically anchoring its context-bound latent detection rule to a verifiable proof rather than hardening the detector itself. However, these methods primarily target traceability and remain vulnerable to forgery and replay attacks that exploit exposed detection oracles \cite{jiang2023evading,muller2025blackbox}.

\subsection{Cryptographic Provenance and Verifiable Computation}\label{sec:crypto-binding}
VeriLLM \cite{wang2025verillm} uses sampling-based commitments and randomized openings for publicly verifiable LLM inference; its ``good test'' property, that small random checks detect deviations with non-negligible probability, is analogous to PP-Mark's partial-opening bound (Theorem~\ref{thm:unforgeability}). PP-Mark adapts this commit-open-verify paradigm to image watermarking but additionally operates in zero knowledge and binds the opening set to the published image hash, preventing cross-image proof transfer. Adjacent provable guarantees follow distinct axes: CLUE-Mark~\cite{shehata2024cluemark} provides \emph{undetectability} via the CLWE problem (an adversary without the secret key cannot distinguish watermarked from unwatermarked samples), whereas PP-Mark targets the orthogonal property of \emph{unforgeability}, any accepted image must have been produced by the legitimate generator, and the two are combinable in principle. Unlike C2PA \cite{c2pa_spec_2_3}, which relies on certificate-based metadata signing, PP-Mark proves that a committed embedding trace is consistent with the public binding and anchored to the published image hash, a guarantee signatures alone cannot provide.

  \begin{figure*}[t]
    \centering
    \includegraphics[width=\ppmarkdoublefigurewidth]{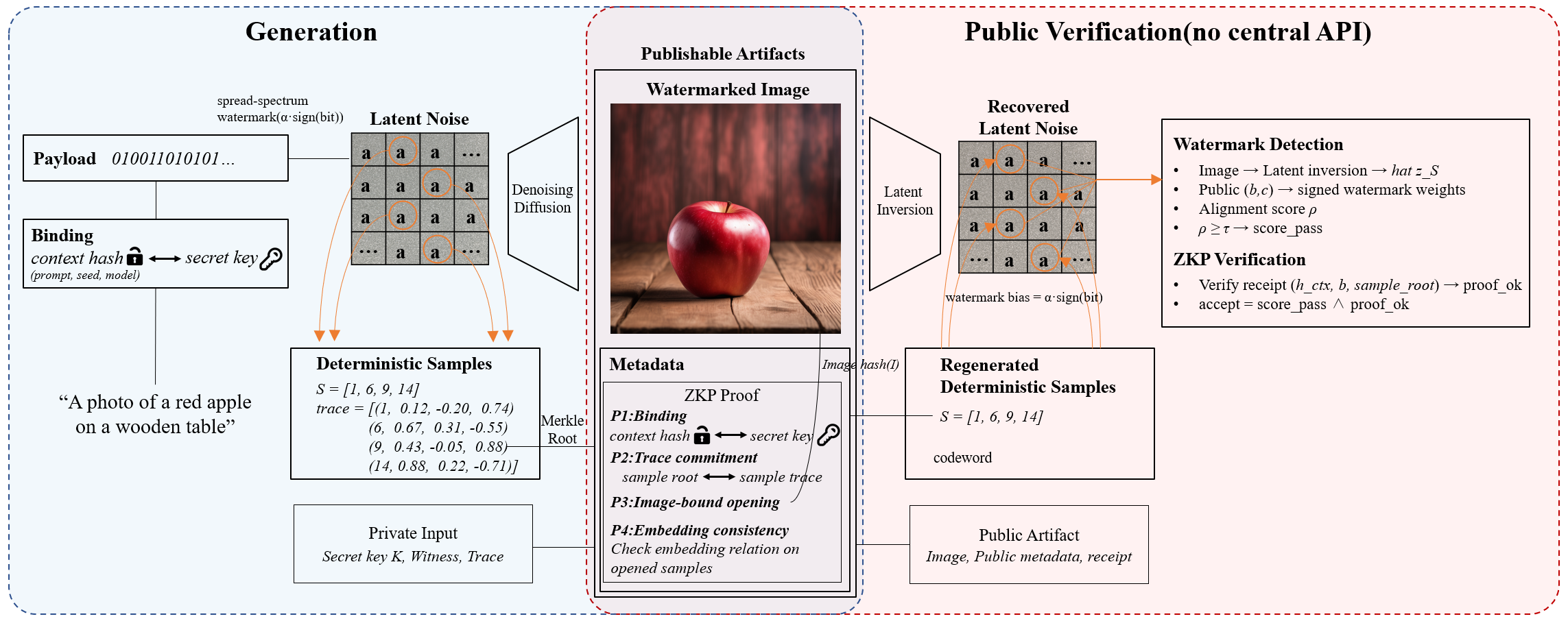}
  \caption{PP-Mark overview. The generator embeds a latent watermark, commits sampled traces to a Merkle root, and produces a ZKP receipt. Verification uses latent inversion to compute a detection score and validates the receipt; acceptance requires both.}
  \label{fig:overview}
  \end{figure*}

\section{Method}
PP-Mark binds a latent-space watermark to the generation context and attaches a standalone SP1 zero-knowledge proof \cite{succinct2024sp1}. Provenance can be verified by any third party using only public artifacts, through a combination of a detection score and cryptographic proof verification; Figure~\ref{fig:overview} illustrates the end-to-end flow. Table~\ref{tab:key-notation} summarizes key notation.

\subsection{Overview and Threat Model}
\noindent
\begin{minipage}[t]{0.49\linewidth}
\vspace{0pt}
\setlength{\parskip}{5.5pt}
Verification combines a calibrated correlation score (FPR controlled by $\tau$) with ZKP receipt validation; concrete schemas and the public/private split are detailed in Sec.~\ref{sec:binding-payload}--\ref{sec:embedding} and Algorithm~\ref{alg:verify}.

We distinguish external forgery from producer non-compliance. An external adversary sees images and public artifacts and may manipulate images, but does not know the producer's binding key $k$; the theorem considers this case for a verifier-pinned producer claim. Non-compliance means that a producer or metadata signer omits embedding, as evaluated in Sec.~\ref{sec:sig-vs-zkp}. Score assumes honest embedding; Accept additionally checks $\mathcal{P}_1$--$\mathcal{P}_4$. The metadata signing key $\mathrm{sk_{sig}}$ used by the signature baselines is distinct from $k$ and may be compromised. Compromise of $k$ is discussed in App.~\ref{Additional Limitations and Discussions}.
\end{minipage}\hfill
\begin{minipage}[t]{0.49\linewidth}
\vspace{0pt}
\captionsetup{type=table}
\caption{Key notation.}
\label{tab:key-notation}
\centering
\setlength{\tabcolsep}{3pt}
\begin{tabular}{@{}>{\raggedright\arraybackslash}p{0.36\linewidth}>{\raggedright\arraybackslash}p{\dimexpr0.64\linewidth-2\tabcolsep\relax}@{}}
\toprule
Symbol & Meaning \\
\midrule
$k$ & Producer's secret binding key. \\
\addlinespace[2pt]
$h_{\mathrm{ctx}}$, $b$, $h_{\mathrm{img}}$ & Context hash; public context--key binding; canonical pixel hash. \\
\addlinespace[2pt]
$S$, $N$ & Sampled positions; $N=|S|$. \\
\addlinespace[2pt]
$K$, $k_{\mathrm{open}}$ & Opened subset $K\subseteq S$; $k_{\mathrm{open}}=|K|$. \\
\addlinespace[2pt]
$z^{(0)}$, $\hat z^{(0)}$ & Embedded / inverted initial latent. \\
\addlinespace[2pt]
$\mathcal T$,\newline $\mathrm{sample\_root}$ & Embedding trace / public Merkle root. \\
\addlinespace[2pt]
$w$, $w_i$ & Private proof witness / expected watermark signal. \\
\bottomrule
\end{tabular}
\par\vspace{\baselineskip}
\end{minipage}
\par

\subsection{Binding and Payload}\label{sec:binding-payload}
PP-Mark binds each watermark to the generation context, the prompt $p$, random seed $s$, model identifier $M$, output resolution $W{\times}H$, and producer's secret key $k$, so that reuse or transfer under a different setting is detectable. We first hash the prompt and then construct a context hash using domain separation:
\begin{equation}
h_{\mathrm{ctx}} = \mathsf{H}(\texttt{ctx\_hash} \,\|\, h_p \,\|\, M \,\|\, s \,\|\, W \,\|\, H),
\quad h_p = \mathsf{H}(p).
\end{equation}

The public binding value is derived from the context hash and the secret key:
\begin{equation}
b = \mathsf{H}(\texttt{ctx\_hash} \,\|\, h_{\mathrm{ctx}} \,\|\, k)
\label{eq:binding_relation}
\end{equation}

Let \(u = (h_{\mathrm{img}}, h_{\mathrm{ctx}}, \mathrm{sample\_root})\). The opening challenge seed is defined as
\begin{equation}
\mathrm{challenge\_seed} = \mathsf{H}(\texttt{ppmark\_opening\_v1} \,\|\, u).
\end{equation}

The proof additionally commits to an opening digest derived from $\mathrm{challenge\_seed}$, corresponding to a deterministic subset of the sampled trace.

Any change to prompt, model, seed, or resolution yields a different $b$, preventing cross-context reuse. From $b$ we derive a fixed-length payload $m = \mathrm{Trunc}(b)$, apply Reed--Solomon encoding to obtain codeword $c$, and expand $c$ into the binary sequence $\mathtt{bits}$ that drives sampling and embedding.

\subsection{Deterministic Sampling and Bit Mapping}
PP-Mark verifies a fixed subset of latent positions, deterministically derived from the public codeword:
\begin{equation}
S = \mathrm{DeterministicSample}(c, N).
\end{equation}

Each spatial coordinate $(x,y)$ maps to a bit index via:
\begin{equation}
\mathrm{idx}(x,y) = \mathsf{H}(b \,\|\, x \,\|\, y \,\|\, 777) \bmod L,
 L = |\mathtt{bits}|.
\end{equation}

A deterministic opening subset $K \subseteq S$ of size $k_{\mathrm{open}}{=}32$ is selected via $\mathrm{challenge\_seed}$. Since $S$ derives from the public $c$, any verifier reproduces the same positions, spreading the payload for robust detection under mild distortions while keeping proof cost bounded and verification publicly auditable.

\subsection{Watermark Embedding}\label{sec:embedding}
For each latent index \(i\), a pseudo-random uniform value \(u_i\) is generated from the
binding value \(b\) and mapped to a raw Gaussian baseline \(\tilde g_i\) via an inverse-CDF lookup
table:
\begin{equation}
u_i = \mathsf{H}(b \,\|\, i{+}1) \bmod 2^{32},
\quad
\tilde g_i = F^{-1}\!\left( \frac{u_i}{2^{32}} \right).
\end{equation}

Each spatial position is indexed as \(i=(x,y)\) on a fixed reference channel. A payload bit
is mapped to a sign \(s_i \in \{-1,+1\}\), and the watermark is injected into the baseline:
\begin{equation}
s_i = 2 \cdot \mathtt{bits}_{\mathrm{idx}(x,y)} - 1,
\quad
z_i = \tilde g_i + \alpha s_i.
\end{equation}

To maintain stable latent variance and ensure that the embedding remains within the
expected distribution of the pre-trained model, we apply the following normalization:
\begin{equation}
z_i^{(0)} = \frac{z_i}{\sqrt{1+\alpha^2}},
\quad
g_i = \frac{\tilde g_i}{\sqrt{1+\alpha^2}},
\quad
\alpha_{\mathrm{eff}} = \frac{\alpha}{\sqrt{1+\alpha^2}}.
\end{equation}
Thus $z_i^{(0)}=g_i+\alpha_{\mathrm{eff}}s_i$; $g_i$ denotes the normalized baseline throughout verification and the committed trace.

\begin{wrapfigure}{r}{0.43\linewidth}
  \centering
  \vspace{-0.8\baselineskip}
  \footnotesize
  \begin{tikzpicture}[
    scale=0.9,
    transform shape,
    box/.style={draw, rounded corners, align=center, minimum width=1.95cm, minimum height=0.42cm},
    node distance=0.28cm and 0.45cm
  ]
  \node[box] (leafL) {Leaf $H(i \| g_i \| z_i^{(0)} \| \mathrm{bit})$};
  \node[box, right=of leafL] (leafR) {Sibling hash};
  \node[box, above=of leafL, xshift=0.9cm] (h1) {$H(\cdot)$};
  \node[box, above=of h1] (h2) {$H(\cdot)$};
  \node[above=of h2] (dots) {\scriptsize$\vdots$};
  \node[box, above=of dots] (root) {$\mathrm{sample\_root}$};

  \draw[-Latex, thick] (leafL) -- (h1);
  \draw[-Latex, thick] (leafR) -- (h1);
  \draw[-Latex, thick] (h1) -- (h2);
  \draw[-Latex, thick] (h2) -- (dots);
  \draw[-Latex, thick] (dots) -- (root);
  \end{tikzpicture}
  \caption{Merkle authentication path for a sampled trace element used as a proof witness.}
  \label{fig:merkle}
  \vspace{-4\baselineskip}
\end{wrapfigure}

This normalization stabilizes the detection score and reduces global-rescaling sensitivity. For $i \in S$, we record $(i, g_i, z_i^{(0)}, \mathtt{bit}_i)$ in Q18 fixed-point and Merkle-commit (Fig.~\ref{fig:merkle}); the published $\mathrm{sample\_root}$ anchors the SP1 witness, so any tampering invalidates ZKP receipt verification.

\subsection{SP1 Proof Generation}
PP-Mark leverages the SP1 zkVM \cite{succinct2024sp1} to generate a standalone zero-knowledge proof receipt \(\pi\). The proof establishes, in zero knowledge, that the published binding value \(b\) and the committed sampled trace \(\mathcal{T}\), anchored by \(\mathrm{sample\_root}\), are consistent with a valid secret key \(k\) and payload \(\mathtt{bits}\), without revealing these secrets.

We adopt SP1 for engineering practicality: Rust-based tooling and parallelizable proving fit high-throughput generative pipelines. The circuit $\mathcal{C}$ does not re-simulate diffusion; instead, it verifies that the committed trace is consistent with $b$ via the predicate $\mathcal{P}_1 \wedge \mathcal{P}_2 \wedge \mathcal{P}_3 \wedge \mathcal{P}_4$ (formal definition in Sec.~\ref{sec:formal-security}). Since $b$ deterministically governs the sign pattern through $b \to c \to \mathtt{bits} \to s_i$ (Sec.~\ref{sec:binding-payload}--\ref{sec:embedding}), verifying this chain on the opened subset anchors the full detection pipeline at $O(k_{\mathrm{open}})$ cost rather than $O(N)$. End-to-end score verification still requires a compatible initial-latent embedding and recovery interface.

\paragraph{Statement and Witness.}
The SP1 proof $\pi$ attests to a private witness $w=(k,c,\mathcal{T},\mathrm{paths})$ for the public statement $x$ defined in Sec.~\ref{sec:formal-security}. It includes the binding and image/context hashes, trace and opening commitments, challenge seed, $N$, $k_{\mathrm{open}}$, embedding parameters, and LUT hash $h_{\mathrm{LUT}}=\mathsf{H}(F^{-1})$. The verifier checks these public values against the expected statement. The trace $\mathcal{T}=\{(i,g_i,z_i^{(0)},\mathtt{bit}_i)\}_{i\in S}$ stores the normalized baseline and latent in Q18 fixed-point.

\textbf{Circuit Constraints.}
The circuit $\mathcal{C}$ enforces $\mathcal{P}_1 \wedge \mathcal{P}_2 \wedge \mathcal{P}_3 \wedge \mathcal{P}_4$ (binding-key / Merkle / image-bound challenge / deterministic embedding on opened $K$; full definition in Sec.~\ref{sec:formal-security}, with $\mathcal{P}_4$ anchoring traces to the $b$-derived embedding rather than allowing tautological satisfaction).

\subsection{Extraction and Verification}\label{sec:verification}

\begin{wrapfigure}{r}{\ppmarkalgorithmwrapwidth}
\refstepcounter{algorithm}
\label{alg:verify}
\begin{minipage}{\linewidth}
\hrule height .8pt
\vspace{3pt}
\textbf{\strut Algorithm~\thealgorithm: Verify and Extract}
\par\vspace{3pt}
\hrule
\vspace{2pt}
\raggedright
\ppmarkalgorithmsize
\begin{algorithmic}[1]
\STATE Input:
\parbox[t]{0.85\linewidth}{%
image \(I\), metadata \(\mathrm{md}\)\\
(including \(h_{\mathrm{ctx}}, b, \mathrm{sample\_root},\)\\
\(\mathrm{opening\_digest}, k_{\mathrm{open}}\)), receipt \(\pi\)%
}
\STATE \(\hat{z}^{(0)} \leftarrow \text{DDIM\_Inversion}(I)\)
\STATE Reconstruct \(c\) and \(\mathbf{bits}\) from \(b\); regenerate sample set \(S\) from \(c\)
\FOR{each \(i \in S\)}
  \STATE Reconstruct normalized \(g_i\) as in Sec.~\ref{sec:embedding}
  \STATE \(w_i \leftarrow \alpha_{\mathrm{eff}} \cdot s_i\) where \(s_i = 2 \cdot \mathbf{bits}_{\mathrm{idx}(i)} - 1\)
\ENDFOR
\STATE \(\mathrm{score} \leftarrow \frac{|\langle \hat{z}^{(0)}_S,\, w_S \rangle|}{\|w_S\|}\)
\STATE \(h_{\mathrm{img}} \leftarrow \mathsf{SHA\text{-}256}(\texttt{pixels}(I))\) \COMMENT{image enters proof only via hash}
\STATE \(\texttt{proof\_ok} \leftarrow \text{VerifyReceipt}(\pi; \mathrm{md}, h_{\mathrm{img}})\)
\STATE Return \((\mathrm{score} \ge \tau) \land \texttt{proof\_ok}\)
\end{algorithmic}
\vspace{2pt}
\hrule
\end{minipage}
\vspace{-1\baselineskip}
\end{wrapfigure}

Given image $I$, metadata, and receipt $\pi$, the verifier DDIM-inverts $I$ to $\hat{z}^{(0)}$, computes $h_{\mathrm{img}} = \mathsf{SHA\text{-}256}(\texttt{pixels}(I))$, and validates the opening via $\text{VerifyReceipt}(\pi; x)$ (image enters only through $h_{\mathrm{img}}$); the verifier then reconstructs $c$ and $\mathbf{bits}$ from $b$ to derive the expected sign pattern (Algorithm~\ref{alg:verify}).

\begin{assumption}[Statistical Properties of Inversion Residuals]
\label{assum:noise}
Let \(\epsilon_i = \hat{z}^{(0)}_i - z^{(0)}_i\) denote the DDIM inversion residual at latent index \(i\).
We assume that \(\epsilon_i\) follows an approximately zero-mean distribution, i.e.,
\(\mathbb{E}[\epsilon_i] \approx 0\), and has bounded moderate correlation with the
watermark signal \(s_i\).
\end{assumption}

VAE encode--decode roundtrip induces bounded $\epsilon$--$s$ correlation (mean $|\rho|{=}0.30$ on SD\,2.1, $0.54$ on SDXL; App.~\ref{app:assumption-verify}), attenuating the score. Mean watermarked scores exceed $\tau$ by at least $5\times$ in both models; all evaluated watermarked images exceed $\tau$.

Under Assumption~\ref{assum:noise}, for each sampled index \(i \in S\) the recovered
latent value is \(\hat{z}^{(0)}_i = z^{(0)}_i + \epsilon_i\), where \(z^{(0)}_i = g_i + \alpha_{\mathrm{eff}}\, s_i\). Defining the expected watermark signal \(w_i = \alpha_{\mathrm{eff}}\, s_i\), the inner product
\(\langle \hat{z}^{(0)}_S, w_S \rangle = \alpha_{\mathrm{eff}}^2 \|s_S\|^2 + \langle g_S, w_S\rangle + \langle \epsilon_S, w_S\rangle\).
Because \(g\) and \(s\) are derived from independent hash chains, \(\langle g_S, w_S\rangle \approx 0\) in expectation.
\label{eq:residual}

PP-Mark uses $N{=}1000$ samples, balancing score-variance reduction (via averaging) against SP1 proving cost (App.~\ref{app:sample-size} ablates $N$). The detection score is defined as the normalized correlation over the sample set $S$:
\begin{equation}
\mathrm{score}
= \frac{|\langle \hat{z}^{(0)}_S,\, w_S \rangle|}{\|w_S\|}.
\end{equation}

An image is accepted as authentic if and only if it satisfies both the statistical and cryptographic verification criteria:

\begin{equation}
\text{Accept}(I,\pi)\iff (\mathrm{score}\ge\tau)\land(\text{VerifyReceipt}(\pi; x)=1).
\end{equation}

\subsection{Synchronization and Security Rationale}\label{sec:synchronization}
To handle transformation-induced misalignment, PP-Mark employs an alignment search over latent positions. Deterministic sampling at $N{=}1000$ keeps proving cost tractable while Hoeffding concentration~\cite{hoeffding1963} preserves the statistical reliability required for stable detection.

Binding $b$ ties the watermark to the full generation context (any prompt or seed change yields a distinct commitment); anchoring score detection to ZKP receipt $\pi$ ensures verification fails for score-optimized forgeries lacking a consistent latent trace.

\raggedbottom
\begin{wraptable}{r}{\ppmarkthreattablewrapwidth}
\vspace*{-0.6\baselineskip}               
\setlength{\abovecaptionskip}{0pt}%
\caption{Threat coverage by verification mode. In score mode the verifier trusts that the generator embedded honestly: a malicious generator could fabricate a binding $b'$ whose induced sign pattern exceeds $\tau$ on an unwatermarked image (Table~\ref{tab:sig-vs-zkp}). Accept mode blocks this via ZKP verification of the $b{\to}\mathtt{bits}{\to}z$ derivation chain.}
\label{tab:threat-coverage}
{\footnotesize
\setlength{\tabcolsep}{2pt}%
\renewcommand{\arraystretch}{1.0}%
\begin{tabular}{@{}>{\raggedright\arraybackslash}p{0.68\linewidth}cc@{}}
\toprule
Threat model & Score & Accept \\
\midrule
External adversary (unknown $k$) & \checkmark & \checkmark \\
Generator fraud (no embedding) & $\times$ & \checkmark \\
Third-party audit (no $k$ disclosure) & $\times$ & \checkmark \\
\bottomrule
\end{tabular}}
\vspace{-0.8\baselineskip}
\end{wraptable}

The statistical and cryptographic layers share a common trust anchor: the sign pattern $s_i$ that drives the score is deterministically keyed by $b$. Without assurance that $b$ was legitimately derived, an adaptive adversary could fabricate a binding $b'$ whose induced sign pattern maximizes the score for an arbitrary target image; the ZKP forecloses this fabricated-binding attack by certifying that $b$ originates from a valid secret key and generation context (Theorem~\ref{thm:unforgeability}). Conversely, the score layer provides graceful degradation under lossy transforms (App.~\ref{Image Transformations}) that the proof alone cannot. Image-binding is achieved exclusively through $h_{\mathrm{img}}$ in the challenge seed ($\mathcal{P}_3$, Sec.~\ref{sec:formal-security}), which prevents cross-image proof transfer; the two layers thus address complementary threat surfaces through the same cryptographic foundation.

This complementarity motivates two operational modes. \textbf{PP-Mark(score)} checks $\mathrm{score}\geq\tau$ for routine screening (0.72\,s) and tolerates common lossy transforms. \textbf{PP-Mark(accept)} additionally checks an image-bound receipt for high-assurance auditing. 17.0\,s projected receipt issuance at $N=1000$ (Table~\ref{tab:proof-backend-comparison}); a receipt is issued once and can be cached. Table~\ref{tab:threat-coverage} and Sec.~\ref{sec:sig-vs-zkp} distinguish the threat models. Score and Accept provide explicitly different assurance levels.

\paragraph{Image hashing for $h_{\mathrm{img}}$.}
$h_{\mathrm{img}}$ is SHA-256 over canonicalized pixels; exact-image binding refers to these pixels, not file-container bytes. A changed canonical image needs a fresh receipt only if it must itself receive Accept. Score screening of lossy derivatives requires no re-proof (App.~\ref{app:verification-design}).

\subsection{Formal Security Guarantee}\label{sec:formal-security}

\begin{assumption}\label{asm:crypto}
SHA-256 is collision- and preimage-resistant and is modeled as a random oracle for challenge sampling. The receipt proof is assumed knowledge-sound for the declared relation with security parameter $\lambda$. The verifier pins the producer claim $(b,h_{\mathrm{ctx}})$ and checks the receipt's public values against the expected image and protocol parameters.
\end{assumption}

\paragraph{Composite verification predicate.}
Let \(x\) denote the public statement and \(w\) the private witness:
\begin{align*}
x &= (b,h_{\mathrm{ctx}},h_{\mathrm{img}},\mathrm{sample\_root},\mathrm{opening\_digest}, \\
  &\qquad \mathrm{challenge\_seed},N,k_{\mathrm{open}},\alpha,\alpha_{\mathrm{eff}},h_{\mathrm{LUT}}), \\
w &= (k, c, \mathcal{T}, \mathrm{paths}), \qquad
\mathcal{T}=\{(i,g_i,z_i^{(0)},\mathtt{bit}_i)\}_{i \in S}.
\end{align*}
The SP1 circuit \(\mathcal{C}\) enforces the composite predicate
\(\mathcal{P}(x, w) = \bigwedge_{j=1}^{4} \mathcal{P}_j(x, w)\):
\begin{align*}
\mathcal{P}_1&: b = \mathsf{H}(\texttt{ctx\_hash} \,\|\, h_{\mathrm{ctx}} \,\|\, k), \\
\mathcal{P}_2&: \forall i \in S,\; \mathrm{FoldMerkle}(\mathrm{leaf}_i, \mathrm{paths}_i) = \mathrm{sample\_root}, \displaybreak[0] \\
\mathcal{P}_3&: \mathrm{challenge\_seed}=\mathsf{H}(\texttt{ppmark\_opening\_v1}\|h_{\mathrm{img}}\|h_{\mathrm{ctx}}\|\mathrm{sample\_root}), \\
&\quad \mathrm{opening\_digest} = \mathsf{H}\!\left( \{\mathrm{leaf\_tuple}_i\}_{i \in K} \right), \\
&\text{where } K = \mathrm{DeriveOpen}(\mathrm{challenge\_seed}, k_{\mathrm{open}}, N), \displaybreak[0] \\
\mathcal{P}_4&: \forall i \in K,\;
  \mathtt{bit}_i = \mathtt{bits}_{\mathrm{idx}(i)}, \quad
  g_i = \mathsf{Q}_{18}\!\left(\frac{F^{-1}(u_i/2^{32})}{\sqrt{1+\alpha^2}}\right), \\
  &\qquad\qquad\; z_i^{(0)} = \mathsf{Q}_{18}\!\left(g_i + \alpha_{\mathrm{eff}} \cdot (2\mathtt{bit}_i - 1)\right).
\end{align*}
\flushbottom

Here $\mathtt{bits}=\mathrm{BitExpand}(\mathrm{RS}(\mathrm{Trunc}(b)))$, $u_i$ follows Sec.~\ref{sec:embedding}, and $\mathsf{Q}_{18}$ denotes the declared fixed-point rounding convention. The embedding parameters and LUT are fixed by the verified configuration. Clauses $\mathcal{P}_1$--$\mathcal{P}_4$ check key binding, trace commitment, image-bound opening, and deterministic embedding consistency, respectively. The sampling bound below requires the candidate image and trace to be fixed before its opening challenge is queried.

\begin{theorem}[Unforgeability under composite predicate and partial opening]\label{thm:unforgeability}
Under Assumption~\ref{asm:crypto}, fix a target producer claim $(b,h_{\mathrm{ctx}})$. Consider one candidate whose canonical image and trace are fixed before its first opening-challenge query, with $K$ sampled uniformly without replacement. For a probabilistic polynomial-time adversary $\mathcal{A}$ without $k$ whose committed trace has at least $q$ embedding violations,
\begin{align*}
\Pr[&\mathrm{PP\text{-}Mark}_{\mathrm{accept}}(I^*, \mathrm{md}^*, \pi^*) = 1 \;\wedge\; (I^*, \pi^*) \notin \mathcal{G}] \\
&\leq \; \mathrm{Adv}^{\mathrm{hash}}_{\mathsf{H}}(\mathcal{A}) + \mathrm{Adv}^{\mathrm{sound}}_{\mathrm{SP1}}(\mathcal{A}) + \binom{N-q}{k_{\mathrm{open}}} \big/ \binom{N}{k_{\mathrm{open}}},
\end{align*}
where $\mathcal{G}$ contains honest image-proof pairs for the target claim, $\mathrm{Adv}^{\mathrm{hash}}$ aggregates the required hash failures, and $\mathrm{Adv}^{\mathrm{sound}}$ is the assumed knowledge-soundness error. The last term is at most $(1-q/N)^{k_{\mathrm{open}}}$; the bound is per fixed candidate.
\end{theorem}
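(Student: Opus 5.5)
The argument is a game-hopping decomposition of the forgery event $E = \{\mathrm{PP\text{-}Mark}_{\mathrm{accept}}(I^*,\mathrm{md}^*,\pi^*)=1 \wedge (I^*,\pi^*)\notin\mathcal{G}\}$ into three disjoint failure modes, with a union bound charging one term of the stated bound to each. Since Accept implies $\text{VerifyReceipt}(\pi^*;x^*)=1$, and the verifier pins $(b,h_{\mathrm{ctx}})$ and recomputes $h_{\mathrm{img}}$ from $I^*$ (Assumption~\ref{asm:crypto}), the statement $x^*$ is the one tied to the target claim and the actual candidate image.

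\textbf{Step 1 (extraction).} Invoke knowledge soundness of the SP1 receipt to obtain an extractor that, from any accepting $\mathcal{A}$, outputs a witness $w^*=(k^*,c^*,\mathcal{T}^*,\mathrm{paths}^*)$ with $\mathcal{P}(x^*,w^*)=1$. It fails with probability at most $\mathrm{Adv}^{\mathrm{sound}}_{\mathrm{SP1}}(\mathcal{A})$. Conditioned on successful extraction, all four clauses hold for $w^*$.

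\textbf{Step 2 (hash events).} Clause $\mathcal{P}_1$ yields $b=\mathsf{H}(\texttt{ctx\_hash}\|h_{\mathrm{ctx}}\|k^*)$. An adversary without $k$ that produces such a $k^*$ either finds a preimage of $b$ or, if $k^*\neq k$, a collision with the honest derivation. $\mathcal{P}_2$ pins the opened leaves to $\mathrm{sample\_root}$ unless a Merkle collision is found. $\mathcal{P}_3$ fixes $\mathrm{challenge\_seed}$ and $\mathrm{opening\_digest}$ as hashes of $(h_{\mathrm{img}},h_{\mathrm{ctx}},\mathrm{sample\_root})$ and of the opened leaf tuples. Reusing an honest receipt on a different image $I^*$ therefore requires $\mathsf{SHA\text{-}256}(\texttt{pixels}(I^*))=h_{\mathrm{img}}$ of an honest image, which is a collision; this is exactly what excludes $(I^*,\pi^*)\in\mathcal{G}$ style transfers. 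All these events are collected into $\mathrm{Adv}^{\mathrm{hash}}_{\mathsf{H}}(\mathcal{A})$.

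\textbf{Step 3 (sampling).} Outside the events of Steps 1 and 2, the remaining way to win is a committed trace $\mathcal{T}^*$ with at least $q$ indices $i\in S$ violating the $\mathcal{P}_4$ relations, none of which lands in $K$. The candidate image and trace, hence $u=(h_{\mathrm{img}},h_{\mathrm{ctx}},\mathrm{sample\_root})$, are fixed before the first opening-challenge query. In the random-oracle model, $\mathrm{challenge\_seed}$ is then uniform and independent of the violation set $V$, and $K=\mathrm{DeriveOpen}(\cdot)$ is a uniform $k_{\mathrm{open}}$-subset of $S$ drawn without replacement. Thus
\[
\Pr[K\cap V=\emptyset]=\binom{N-|V|}{k_{\mathrm{open}}}\Big/\binom{N}{k_{\mathrm{open}}}\le \binom{N-q}{k_{\mathrm{open}}}\Big/\binom{N}{k_{\mathrm{open}}}.
\]
The product form $\prod_{j=0}^{k_{\mathrm{open}}-1}\frac{N-q-j}{N-j}$ gives the bound $(1-q/N)^{k_{\mathrm{open}}}$, since each factor is at most $1-q/N$. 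Summing the three terms yields the theorem.

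\textbf{Main obstacle.} The delicate step is Step 3's independence claim. An adversary could grind over many candidate images or traces, re-querying $\mathsf{H}$ until some $K$ avoids $V$. This is why the bound is explicitly per fixed candidate. I would formalize it by programming the random oracle at the first query on $u$ and arguing that for a fixed $(I^*,\mathcal{T}^*)$ only that single query determines $K$, stating that multi-candidate grinding multiplies this term by the number of candidates. A secondary subtlety is making precise that acceptance with an extracted $k^*$ is counted inside $\mathrm{Adv}^{\mathrm{hash}}$. I would handle this by noting that $b$ was produced honestly from a random secret $k$, so outputting any valid $k^*$ is a preimage attack on $\mathsf{H}$.
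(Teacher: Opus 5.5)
Your proposal is correct and follows essentially the same route as the paper's appendix proof: knowledge-soundness extraction, then the paper's Cases A (binding preimage), B (Merkle collision) and D (image-hash collision) folded into a single hash-failure step, then the random-oracle partial-opening count of Case C for a candidate fixed before its challenge query, combined by a union bound. Handling $|V|\ge q$ through monotonicity of the binomial ratio is only a small tightening of the paper's exactly-$q$ violation set, not a different argument.
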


\begin{proof}[Proof sketch; full reduction in Appendix~\ref{app:full-proof}]
Acceptance requires both $\mathrm{score}\geq\tau$ and a receipt matching the expected public statement. Under the assumed knowledge-soundness, a valid receipt yields a witness satisfying $\mathcal{P}_1$--$\mathcal{P}_4$, except with error $\mathrm{Adv}^{\mathrm{sound}}_{\mathrm{SP1}}$.

\emph{Binding and commitment.} For the pinned claim, recovering a valid secret-key witness or changing a committed trace without changing its root requires a preimage or collision failure, included in $\mathrm{Adv}^{\mathrm{hash}}_{\mathsf{H}}$.

\emph{Partial opening.} Under the fixed-candidate and uniform-sampling conditions, $K$ misses the $q$ violating positions with probability $\binom{N-q}{k_{\mathrm{open}}}/\binom{N}{k_{\mathrm{open}}}$. Otherwise $\mathcal{P}_4$ rejects. The image-derived challenge alone does not establish these sampling conditions for adaptively selected candidates.

\emph{Image binding.} A receipt for different canonical pixels fails the public-statement comparison unless their hashes collide. This argument does not require different challenge seeds to produce different subsets $K$.

Summing the three failure modes yields the stated bound. 
\end{proof}

\paragraph{Parameter choice and remaining guarantees.} At $N{=}1000$, $k_{\mathrm{open}}{=}32$, and $q{=}125$, the exact miss probability is $0.01297$ (98.7\% spot-check detection); at $q{=}500$ it is at most $2^{-32}$. The bound weakens for small $q$, and clean-FPR calibration alone does not bound adversarial FAR. Appendix~\ref{app:full-proof} gives the scope and fixed-point details.

\section{Experiments}
\paragraph{Experimental Setup.}
We use Stable Diffusion v2.1 base with 50 DDIM generation steps and guidance scale 7.5. Generation steps are distinct from inversion steps: the transformation evaluation uses FP32 DDIM-100 inversion and alignment (App.~\ref{Image Transformations}). PP-Mark uses $\alpha=4.0$, $N=1000$, and SHA-256 sampling. Prompts are drawn from the Stable-Diffusion-Prompts dataset; prompt--seed lists and frozen calibration settings accompany the reproduction materials. All methods use native detection statistics calibrated to 1\% FPR on clean SD2.1 images. SDXL results are in App.~\ref{app:sdxl}; BRISQUE, KID, and CLIP comparisons are in App.~\ref{app:quality} (Table~\ref{tab:quality-comparison}).

\paragraph{Runtime.}
PP-Mark$_{\mathrm{score}}$ retains the submitted 0.89\,s embedding and 0.72\,s verification times. SP1 v6.3.1 at $N=1000$, reports 17.0\,s projected receipt issuance and a 2.81\,MB artifact. Tables~\ref{tab:runtime}--\ref{tab:proof-backend-comparison} distinguish the timing boundaries and submitted versus updated configurations.

\subsection{Verification Ablation}\label{sec:sig-vs-zkp}

\begin{wraptable}[10]{r}{\ppmarkablationtablewrapwidth}
\vspace{-1.3\baselineskip}
\setlength{\abovecaptionskip}{0pt}%
\setlength{\belowcaptionskip}{7pt}%
\caption{Forgery Acceptance Rate (FAR) under two attacks across four verification architectures ($n=100$, FPR\,=\,1\%).}
\label{tab:sig-vs-zkp}
\centering
{\footnotesize
\setlength{\tabcolsep}{1.5pt}%
\renewcommand{\arraystretch}{1.0}%
\begin{tabular}{@{}>{\raggedright\arraybackslash}p{0.31\linewidth}cccc@{}}
\toprule
Attack & Sig-only & Sig+Score & \makecell{PP-Mark\\(score)} & \makecell{PP-Mark\\(accept)} \\
\midrule
Forged binding & 1.00 & 1.00 & ${\leq}$0.01$^{\dagger}$ & \textbf{0.00} \\
Compliance bypass & 1.00 & 0.00 & ${\leq}$0.01$^{\dagger}$ & \textbf{0.00} \\
\bottomrule
\end{tabular}
}

\vspace{-3pt}

\begin{flushleft}
{\scriptsize $^{\dagger}$Upper-bounded by FPR; 0/100 observed. Binding $b$ is cryptographically fixed by the secret key.\par}
\end{flushleft}  

\vspace{-1\baselineskip}
\end{wraptable}

To validate the necessity of zero-knowledge verification (Table~\ref{tab:threat-coverage}), we compare four verification architectures under two targeted attacks on $n=100$ watermarked images at FPR\,=\,1\%:

\smallskip
\noindent\textbf{Forged binding.} An adversary who controls only the metadata signing key $\mathrm{sk_{sig}}$, a standard digital-signature key used by the Sig-only and Sig+Score baselines to authenticate metadata, independent of and unrelated to PP-Mark's binding key $k$, fabricates a binding $b'$ to maximize the detection score for a target image. We evaluate $10{,}000$ random bindings per image (sufficient to expect ${\sim}70$ scores exceeding $\tau$ under the null per-trial probability of ${\approx}0.7\%$) and report the fraction for which the best score exceeds $\tau$.

\smallskip
\noindent\textbf{Compliance bypass.} An adversary generates a clean (unwatermarked) image and signs legitimate-looking metadata without embedding any watermark signal.

Table~\ref{tab:sig-vs-zkp} reports the results. Sig-only fails both attacks; Sig+Score blocks compliance bypass but not forged binding because random search over $10{,}000$ bindings consistently finds one whose induced sign pattern exceeds $\tau$ (mean best score $3.88 \gg \tau{=}2.47$). PP-Mark(score) resists both because the binding is cryptographically fixed by the secret key $k$; PP-Mark(accept) additionally rejects any fabricated binding via ZKP, confirming Theorem~\ref{thm:unforgeability} empirically.

For exact canonical-image integrity under a trusted signer, an image-hash signature is simpler and sufficient; App.~\ref{app:signature-controls} isolates PP-Mark's added embedding-compliance evidence and transform-tolerant Score.

\raggedbottom

\subsection{Imprint Forgery Attack}\label{sec:imprint-forgery}
Imprint forgery is a strong black-box and method-agnostic threat: from a
single watermarked reference image, an attacker attempts to transfer the watermark imprint onto an arbitrary cover so the verifier attributes it to the same source
\cite{muller2025blackbox}. This captures a realistic adversary who does not know the target watermark design but can adapt a proxy generator, making it a practical strong attack. We use SD2.1 for both the target and the proxy in the imprint-forgery attack. To evaluate this threat fairly, we construct a fixed prompt pool (200 prompts; 100/100 calibration/evaluation) and prepare a cover pool of 100 images as targets for attribution transfer. We fix 100 cover--reference pairs (one reference per cover) and reuse the same pairs across all methods to avoid pairing bias.

\paragraph{Direct sanity check.}
We first apply the standard imprint-forgery optimizer ($n{=}100$ per step, $T \in \{50, 100, 150\}$) to Tree-Ring and Gaussian Shading \cite{yang2024gaussianshading} to confirm the attack pipeline functions correctly; both are highly vulnerable (TR FAR@1\%FPR: 0.92--0.97; GS: 1.00; full breakdown in Table~\ref{tab:direct-vulnerability}, App.~\ref{app:imprint-forgery-details}).

\paragraph{Transfer (cross-method).}
\begin{wrapfigure}[13]{r}{0.56\linewidth}
    \vspace{-0.6\baselineskip}
    \centering
    \includegraphics[width=\linewidth]{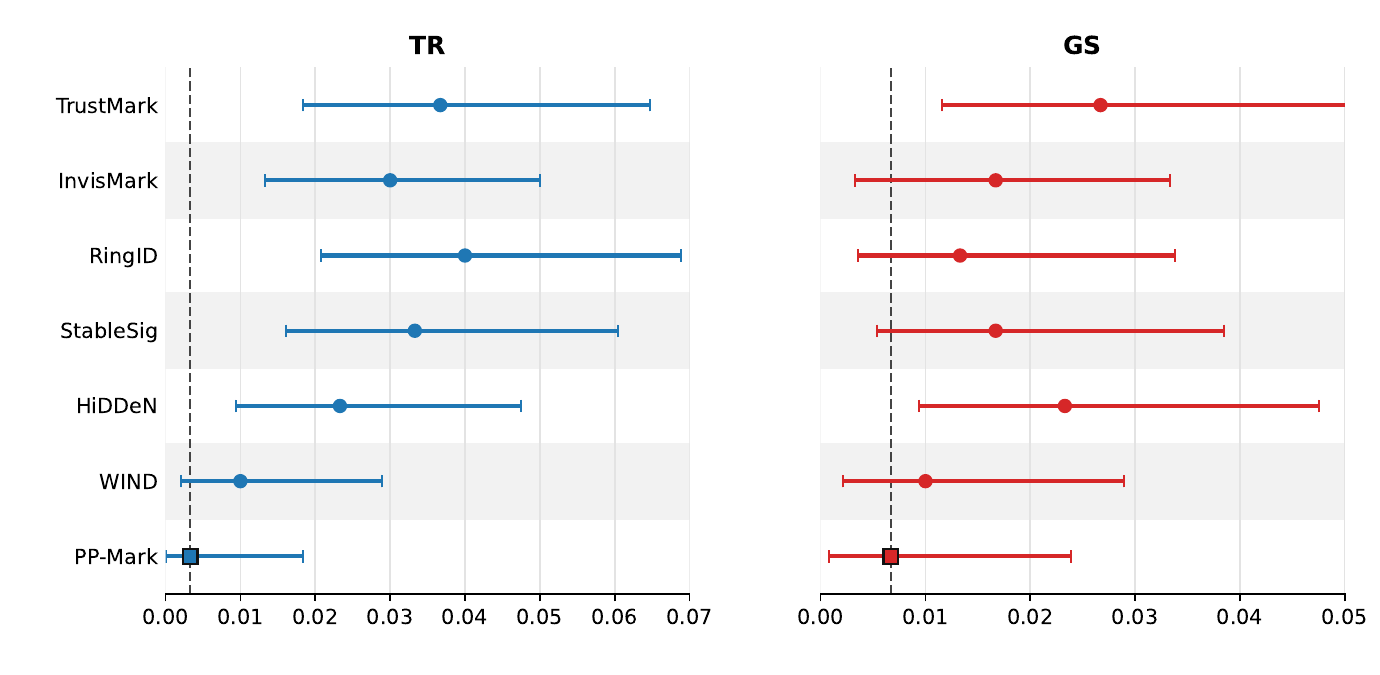}
    \caption{Cross-method transfer at FPR=1\%. Error bars denote 95\% CI; the vertical line indicates PP-Mark$_{\mathrm{score}}$.}
    \label{fig:transfer-forest}
    \vspace{-1.4\baselineskip}
\end{wrapfigure}
We extend the pipeline cross-method: imprints extracted from a TR- or GS-watermarked reference are optimized onto each fixed cover, with all detectors evaluated at thresholds calibrated to 1\% FPR on clean images. Across step budgets $T \in \{50, 100, 150\}$, PP-Mark$_{\mathrm{score}}$ remains at 0.00--0.01, while TrustMark, InvisMark, and other baselines (RingID, StableSig, HiDDeN) exceed the 1\% calibration floor (Table~\ref{tab:transfer-steps}; bit-accuracy calibration for binary-only detectors in App.~\ref{app:imprint-forgery-details}).

\begin{table}[H]
\setlength{\abovecaptionskip}{0pt}%
\setlength{\belowcaptionskip}{2pt}%
    \caption{Transfer summary over steps.
    FAR@1\%FPR (targeted verify) with 95\% CI.
    All methods evaluated at FPR=1\% (clean-calibrated).
    $\Delta$\% is computed as $(k-k_{\mathrm{PP\text{-}Mark_{score}}})/k_{\mathrm{PP\text{-}Mark_{score}}} \times 100$, where $k_{\mathrm{PP\text{-}Mark_{score}}}=1$ (TR) and $k_{\mathrm{PP\text{-}Mark_{score}}}=2$ (GS). Fisher $p$ is from a one-sided Fisher exact test vs PP-Mark$_{\mathrm{score}}$ (H1: method $>$ PP-Mark$_{\mathrm{score}}$).}
    \label{tab:transfer-summary}
    \centering
    {\ppmarktighttablesetup
    \begin{tabular}{
    l
    S[table-format=1.4, round-mode=places, round-precision=4, round-pad=true] c c c
    S[table-format=1.4, round-mode=places, round-precision=4, round-pad=true] c c c
    }
      \toprule
      & \multicolumn{4}{c}{\textbf{TR}} & \multicolumn{4}{c}{\textbf{GS}} \\
      \cmidrule(lr){2-5}\cmidrule(lr){6-9}
      \textbf{Method} & {\textbf{FAR@1\%FPR}} & \textbf{95\% CI} & \textbf{$\Delta$\%} &
\textbf{Fisher $p$}
      & {\textbf{FAR@1\%FPR}} & \textbf{95\% CI} & \textbf{$\Delta$\%} & \textbf{Fisher $p$} \\
      \midrule
      TrustMark & 0.0367 & 0.0184--0.0647 & +1000.0 & 0.0029 & 0.0267 & 0.0116--0.0519 & +300.0 & 0.0532 \\
      InvisMark & 0.0300 & 0.0138--0.0562 & +800.0 & 0.0102 & 0.0167 & 0.0054--0.0385 & +150.0 & 0.2252 \\
      RingID & 0.0400 & 0.0208--0.0688 & +1100.0 & 0.0016 & 0.0133 & 0.0036--0.0338 & +100.0 &
0.3430 \\
      StableSig & 0.0333 & 0.0161--0.0604 & +900.0 & 0.0055 & 0.0167 & 0.0054--0.0385 & +150.0 &
0.2252 \\
      HiDDeN & 0.0233 & 0.0094--0.0475 & +600.0 & 0.0343 & 0.0233 & 0.0094--0.0475 & +250.0 & 0.0882 \\
      WIND & 0.0100 & 0.0021--0.0289 & +200.0 & 0.3119 & 0.0100 & 0.0021--0.0289 & +50.0 & 0.5000 \\
      PP-Mark$_{\mathrm{score}}$ & 0.0033 & 0.0001--0.0184 & -- & -- & 0.0067 & 0.0008--0.0239 & -- & -- \\
      PP-Mark$_{\mathrm{accept}}$ & 0.0000 & -- & -- & -- & 0.0000 & -- & -- & -- \\
      \bottomrule
    \end{tabular}
    }
    \vskip -0.1in
  \end{table}

\paragraph{Summary.}
Table~\ref{tab:transfer-summary} summarizes aggregated transfer acceptance. PP-Mark$_{\mathrm{score}}$ remains at or below the 1\% calibration floor (TR: 0.0033, GS: 0.0067), so the imprint-forgery attack fails to move PP-Mark beyond the clean-image noise floor; baselines exhibit substantially higher TR-sourced transfer.

The $\Delta\%$ column is a descriptive summary computed against PP-Mark$_{\mathrm{score}}$'s small denominators; the absolute FARs and 95\% CIs provide the primary basis for comparison and consistently rank PP-Mark$_{\mathrm{score}}$ as the least susceptible method.

One-sided Fisher exact tests against PP-Mark$_{\mathrm{score}}$ confirm this ordering: TR-sourced differences for RingID, Stable Signature, and HiDDeN are significant ($p=0.0016$, $0.0055$, $0.0343$); GS-sourced differences are not significant at $\alpha{=}0.05$, but the point estimates and CIs preserve the same ordering with PP-Mark$_{\mathrm{score}}$ as least vulnerable.

\subsection{White-Box Forgery Attack}\label{sec:white-box}
\vspace{-1\baselineskip}
\begin{wraptable}[16]{r}{0.5\linewidth}
  \vspace{-2.1\baselineskip}
  \centering
  \setlength{\abovecaptionskip}{0pt}%
  \setlength{\belowcaptionskip}{6pt}%
  \caption{White-Box forgery outcomes across methods. Avg.\ Steps to FA reports, among forgery acceptances, the mean optimization step at which $\mathrm{score}\ge\tau$ first occurs. PSNR is averaged against the original cover image.}
  \label{tab:white-box-summary}
  {\ppmarktighttablesetup
  \setlength{\tabcolsep}{5.5pt}
    \begingroup
    \hypersetup{hidelinks}
    \begin{tabular}{
      l
      S[table-format=1.2, round-mode=places, round-precision=2, round-pad=true]
      S[table-format=3.2, round-mode=places, round-precision=2, round-pad=true]
      S[table-format=2.2, round-mode=places, round-precision=2, round-pad=true]
    }
      \toprule
      \textbf{Method}
      & {\makecell{\textbf{FAR}\\ \textbf{@1\%FPR}}}
      & {\makecell{\textbf{Avg.\ }\\ \textbf{Steps to FA}}}
      & {\makecell{\textbf{Avg.\ }\\ \textbf{PSNR $\uparrow$}}} \\
      \midrule
      TrustMark                 & 1.00 & 2.41  & 75.73 \\
      InvisMark                 & 1.00 & 3.07  & 72.67 \\
      StableSig                 & 1.00 & 5.40  & 65.91 \\
      HiDDeN                    & 1.00 & 8.08  & 62.96 \\
      RingID                    & 1.00 & 2.26  & 74.51 \\
      WIND                      & 1.00 & 22.20 & 55.76 \\ 
      PP-Mark$_{\mathrm{score}}$  & 0.76 & 17.42 & 67.69 \\
      PP-Mark$_{\mathrm{accept}}$ & 0.00 & \multicolumn{1}{c}{--} & \multicolumn{1}{c}{--} \\
      \bottomrule
    \end{tabular}
    \endgroup
  }
  \vspace{0.4\baselineskip}
\end{wraptable}
We consider a strongest case white-box adversary with full access to each detector and
gradients, optimizing inputs to maximize the detection score under an $\ell_\infty$ budget $\epsilon=2/255$ using $T=150$ PGD steps. For each method, forgery acceptance is defined by $\mathrm{score}\ge\tau$, where $\tau$ is calibrated at 1\% FPR.

Table~\ref{tab:white-box-summary} shows all baselines reach FAR@1\%FPR=1.00, while PP-Mark$_{\mathrm{score}}$ remains lower at 0.76. The smaller mean step than WIND despite lower FAR reflects PP-Mark's keyed image-conditioned objective (variable per-image progress) versus WIND's shared fixed-pattern objective (uniform but later acceptance); cumulative FAR trajectories in Fig.~\ref{fig:wb_forge_far_curves}.

\begin{wrapfigure}[12]{r}{0.5\linewidth}
    \vspace{-2.5\baselineskip}
    \centering
    \includegraphics[width=\linewidth]{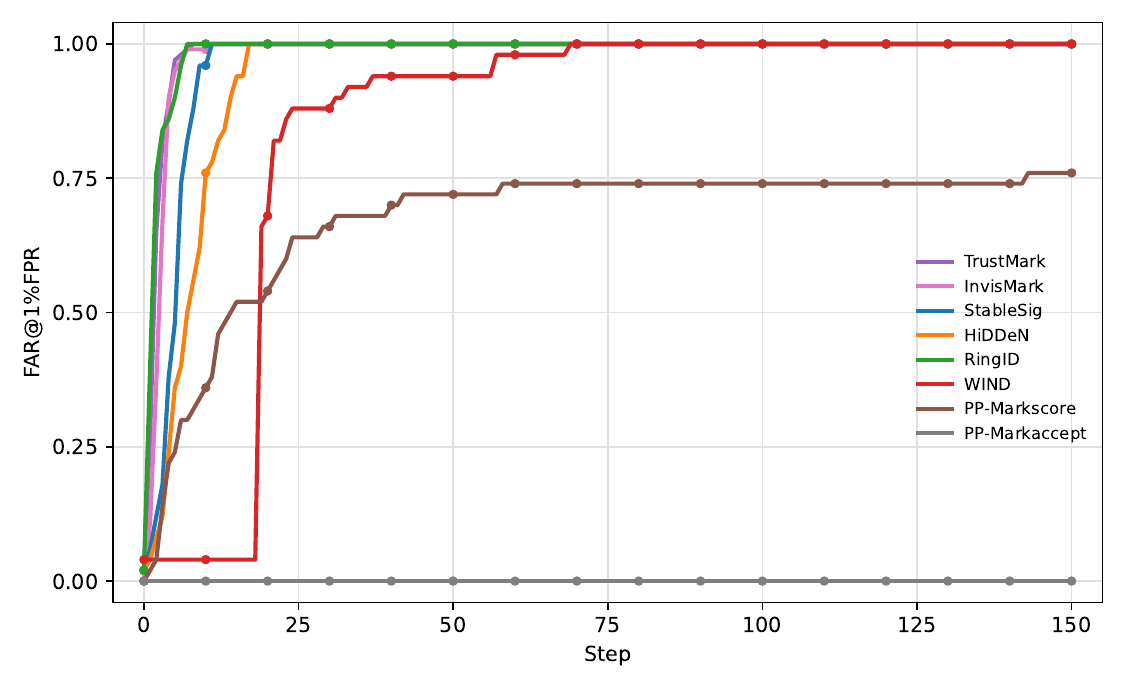}
    \caption{Cumulative FAR@1\%FPR vs. step (same attack budget across all methods).}
    \label{fig:wb_forge_far_curves}
    \vspace{0\baselineskip} 
\end{wrapfigure}

All methods are attacked via their differentiable pipelines: TrustMark, InvisMark, Stable Signature, and HiDDeN use differentiable decoders; RingID and PP-Mark$_{\mathrm{score}}$ use differentiable inversion-based scores. For WIND, we optimize a differentiable Fourier-distance surrogate while evaluating against the actual detector.

For PP-Mark$_{\mathrm{accept}}$, watermark acceptance requires proof verification. The
attacked image changes the image hash $h_{\mathrm{img}}$, which changes 
$\mathrm{challenge\_seed}$ and thus the opening positions and the recomputed 
$\mathrm{opening\_digest}$. \mbox{As a result}, the verifier's opening check fails and the proof verification rejects, keeping PP-Mark$_{\mathrm{accept}}$ at FAR=0.

\subsection{Removal Attacks}
PP-Mark survives standard watermark erasure across three attack families at FPR${\leq}$1\%. \textbf{Regeneration} (diffusion re-synthesis \cite{zhao2024removable}): PP-Mark TPR 0.98 vs StableSig/HiDDeN 0.01--0.06, InvisMark ${\le}0.05$, TrustMark 0.85${\to}$0.39 across noise steps; latent-space embedding survives where pixel-space methods are washed out. \textbf{Steganalysis} ($K$ reference samples): PP-Mark TPR 0.864--0.963 vs Tree-Ring 0.02--0.038, StableSig 0.17--0.20. \textbf{Pixel/geometric transformations}: JPEG/blur/noise/brightness 0.86--1.00; random crop SD\,2.1: 0.87, SDXL: 0.74; rotation $75^\circ$ via rotation-sync (SD\,2.1: 0.85, SDXL: 0.97). Full configurations and per-attack results are in Apps.~\ref{sec:Regeneration Attack}, \ref{Steganalysis Attack}, and~\ref{Image Transformations} (Table~\ref{tab:image-transforms}).

\flushbottom
\section{Conclusion}
PP-Mark combines context-bound statistical detection with public proof verification of a committed embedding relation. Score supports transformation-tolerant screening; Accept additionally checks an image-bound receipt under the stated trust and sampling assumptions. Experiments on SD2.1 and SDXL demonstrate the resulting forgery resistance in the evaluated settings. The proof checks embedding consistency rather than re-executing the generator. Limitations (proof overhead, key lifecycle, geometric robustness, and generative-model scope) are discussed in App.~\ref{Additional Limitations and Discussions}.

\section{Broader Impacts}\label{app:broader-impacts}
This work provides a decentralized, verifiable mechanism for AI image provenance, helping reduce misattribution and synthetic media misuse. By removing reliance on centralized APIs and addressing forgery threats, PP-Mark enables transparent auditing of provenance claims under clear operational assumptions. At the same time, stronger provenance and watermark-verification tools may increase the traceability of generated media in some deployment contexts, which could be undesirable where anonymity or sensitive expression is important. We therefore view PP-Mark as a supporting provenance tool that should be deployed with appropriate transparency and user-facing safeguards.

\section*{Acknowledgements and Disclosures}
This research was supported by the Institute for Information \& Communications Technology Planning \& Evaluation (IITP) grant funded by the Ministry of Science and ICT (MSIT), Republic of Korea, through the Graduate School of Metaverse Convergence Program (RS-2022-00156318); by the Ministry of Culture, Sports and Tourism (MCST), Republic of Korea, through the Korea Creative Content Agency (KOCCA) under the Culture, Sports and Tourism R\&D Program (RS-2023-00219237 and RS-2026-25506280); and by the National Research Foundation of Korea (NRF) grant funded by the Ministry of Science and ICT (MSIT), Republic of Korea (RS-2026-25481217). This research was also supported by the NVIDIA Academic Grant Program using NVIDIA DGX Spark.

\newpage
\bibliographystyle{plainnat}
\bibliography{example_paper}

\clearpage

\appendix
\raggedbottom
\section{Full Proof of Theorem~\ref{thm:unforgeability}}\label{app:full-proof}

We give the step-by-step reduction for Theorem~\ref{thm:unforgeability} (conditions on $\mathcal{A}$ and $q$ are as stated in the main text).

\paragraph{Security game.} Fix the verifier-pinned producer claim $(b,h_{\mathrm{ctx}})$ and let $\mathcal{G}$ contain its honest $(I,\pi)$ pairs. The adversary receives the verification key, public artifacts, and hash-oracle access, but not $k$. The verifier reconstructs the expected public statement, including the canonical image hash, and rejects mismatching proof public values. We consider one candidate image and trace fixed before its first opening-challenge query, as required by Theorem~\ref{thm:unforgeability}; the adversary wins if this candidate is accepted outside $\mathcal{G}$.

\paragraph{Witness extraction.} Under the knowledge-soundness assumption in Assumption~\ref{asm:crypto}, an accepting receipt yields a witness $w^*=(k^*,c^*,\mathcal{T}^*,\mathrm{paths}^*)$ satisfying $\mathcal{P}_1$--$\mathcal{P}_4$, except with error $\mathrm{Adv}^{\mathrm{sound}}_{\mathrm{SP1}}$. The following argument conditions on successful extraction.

\paragraph{Case A: Binding forgery.} The verifier fixes $(b,h_{\mathrm{ctx}})$ independently of the candidate. By $\mathcal{P}_1$, the extracted key satisfies $b=\mathsf{H}(\texttt{ctx\_hash}\|h_{\mathrm{ctx}}\|k^*)$. Recovering such a secret-key witness requires a preimage failure; a distinct key for the pinned input additionally yields a second preimage. These events are included in the aggregate hash-failure term $\mathrm{Adv}^{\mathrm{hash}}_{\mathsf{H}}$.

\paragraph{Case B: Merkle trace forgery.} $\mathcal{P}_2$ requires each leaf $\mathrm{leaf}_i=\mathsf{H}(i,g_i,z_i^{(0)},\mathtt{bit}_i)$ to authenticate to the committed root~\cite{merkle87}. Changing any tuple after fixing that root requires a hash collision, included in $\mathrm{Adv}^{\mathrm{hash}}_{\mathsf{H}}$. A newly committed trace is covered by the partial-opening case below.

\paragraph{Case C: Partial opening with $q$ cheating positions.} Let $F\subseteq S$ contain the $q$ positions violating $\mathcal{P}_4$ in Sec.~\ref{sec:formal-security}: the encoded payload bit, normalized LUT baseline, or Q18 embedding equality is incorrect. Since $\mathcal{P}_4$ checks all three conditions on each opened index, an accepting witness must have $K\cap F=\emptyset$.

The opening set is deterministic in the challenge seed,
\[
\mathrm{challenge\_seed}^* \;=\; \mathsf{H}\!\left(\texttt{ppmark\_opening\_v1} \,\|\, h_{\mathrm{img}}^* \,\|\, h_{\mathrm{ctx}}^* \,\|\, \mathrm{sample\_root}^*\right),
\]
and $K=\mathrm{DeriveOpen}(\mathrm{challenge\_seed}^*,k_{\mathrm{open}},N)$. The candidate's canonical image, trace, and protocol parameters are fixed before this first challenge query. This is an explicit condition of the bound, not a consequence of hashing alone.

Modelling $\mathsf{H}$ as a random oracle~\cite{bellare1993rom} (standard for SHA-256 in this regime), conditionally on all commitments being fixed before the seed is evaluated, $K$ is a uniformly random size-$k_{\mathrm{open}}$ subset of $[N]$. The probability that a uniform $k_{\mathrm{open}}$-subset avoids the $q$-element set $F$ is
\begin{align*}
\Pr[K \cap F = \emptyset] &= \frac{\binom{N-q}{k_{\mathrm{open}}}}{\binom{N}{k_{\mathrm{open}}}}
= \prod_{j=0}^{k_{\mathrm{open}}-1} \frac{N - q - j}{N - j}.
\end{align*}
Since $(N-q-j)/(N-j) \leq (N-q)/N = 1 - q/N$ for every $j \in \{0, \dots, k_{\mathrm{open}}-1\}$ (the ratio is non-increasing in $j$, attaining its maximum $1-q/N$ at $j=0$), the telescoping product yields
\[
\Pr[K \cap F = \emptyset] \;\leq\; (1 - q/N)^{k_{\mathrm{open}}} \;\leq\; \exp\!\left(-k_{\mathrm{open}}\cdot q / N\right).
\]
Conditioned on $K \cap F \neq \emptyset$, $\mathcal{P}_4$ rejects the witness with probability $1$, hence $\pi^*$ fails verification. The contribution of Case~C is therefore $(1 - q/N)^{k_{\mathrm{open}}}$.

\paragraph{Case D: Image binding (cross-image proof transfer).} Consider two images with distinct canonical pixel arrays. If their hashes differ, an existing receipt fails comparison with the target image's expected public statement. If their hashes coincide, this is a hash collision, included in $\mathrm{Adv}^{\mathrm{hash}}_{\mathsf{H}}$. Different seeds need not produce different subsets $K$; receipt binding relies on the checked public statement. Files with identical canonical pixels represent the same artifact for this guarantee.

\paragraph{Union bound.} Combining the failure probabilities from witness extraction and Cases~A--D,
\begin{align*}
\Pr[\mathcal{A}\text{ wins}] \;&\leq\; \mathrm{Adv}^{\mathrm{sound}}_{\mathrm{SP1}}(\mathcal{A}) + \mathrm{Adv}^{\mathrm{hash}}_{\mathsf{H}}(\mathcal{A}) + \frac{\binom{N-q}{k_{\mathrm{open}}}}{\binom{N}{k_{\mathrm{open}}}},
\end{align*}
matching the conditional, per-candidate statement of Theorem~\ref{thm:unforgeability}. \qed

\paragraph{Numerical instantiation.} At $N=1000$, $k_{\mathrm{open}}=32$, and $q=125$, the exact without-replacement miss probability is $0.012965$ (approximately $98.7\%$ detection). The simpler upper bound $(1-q/N)^{32}$ is $0.013940$. This upper bound is approximately $2^{-13.28}$ at $q=250$ and $2^{-32}$ at $q=500$. These are partial-opening probabilities under the theorem's sampling conditions; cryptographic failure terms remain separate.

\paragraph{Small-$q$ regime ($q < 0.125 N$).} The upper bound $(1-q/N)^{32}$ ranges from approximately $0.014$ to $0.968$ for $q\in\{1,\ldots,125\}$, so partial opening alone gives a weak guarantee for small $q$. Score checks all $N$ positions, but clean-FPR calibration does not establish a general adversarial FAR bound. The submitted per-step transfer FARs remain $0$--$0.01$ (Sec.~\ref{sec:imprint-forgery}); the separate white-box FAR results remain as reported in Table~\ref{tab:white-box-summary}. These empirical results do not prove a small-$q$ security bound.

\paragraph{Opening-tolerance parameter $\varepsilon$.} We use $\varepsilon=0$: $\mathcal{P}_4$ independently reconstructs the normalized Gaussian baseline and payload sign, then checks exact equality under the declared Q18 arithmetic. A nonzero tolerance changes the verified relation and would require a separate security analysis.

\section{Verification Design Details}\label{app:verification-design}

This appendix complements Sec.~\ref{sec:synchronization} with the design rationale for using a cryptographic hash rather than a perceptual hash for $h_{\mathrm{img}}$, and Sec.~\ref{sec:sig-vs-zkp} with image-hash signature controls.

\paragraph{Cryptographic hash vs perceptual hash for proof binding.}
We use SHA-256 over canonical pixels for $h_{\mathrm{img}}$ rather than a perceptual hash. The verifier compares the receipt's public statement with the expected image hash, so a change to canonical pixels invalidates an existing receipt except for a hash collision. A perceptual identifier shared by distinct images would weaken this exact-image binding.

Benign edits that change canonical pixels therefore invalidate the original Accept receipt. Routine Score screening of a derivative requires no re-proof; a fresh receipt is needed only to grant Accept to that derivative itself. 17.0\,s projected receipt issuance at $N=1000$ (Table~\ref{tab:proof-backend-comparison}). Exact integrity concerns canonical pixels rather than the original file-container bytes.

\subsection{Image-hash signature controls}\label{app:signature-controls}

The Sig-only and Sig+Score baselines in Sec.~\ref{sec:sig-vs-zkp} authenticate metadata. We additionally evaluate two domain-separated Ed25519 controls with a fixed provider public key and the same canonical-pixel hash $h_{\mathrm{img}}$: \textbf{Bare-Sig} signs $h_{\mathrm{img}}$; \textbf{Context-Sig} signs $h_{\mathrm{img}} \parallel h_{\mathrm{ctx}} \parallel b$.

\begin{table}[H]
\centering
\caption{Image-hash signature controls ($n{=}100$ per test).}
\label{tab:signature-controls}
{\ppmarksingletablesetup
\begin{tabular}{@{}>{\raggedright\arraybackslash}p{0.43\linewidth}cc>{\raggedright\arraybackslash}p{0.24\linewidth}@{}}
\toprule
Test & Bare-Sig & Context-Sig & PP-Mark evidence \\
\midrule
Original verifies & 100/100 & 100/100 & --- \\
Cross-image replay & 0/100 & 0/100 & --- \\
Context substitution & 100/100 & 0/100 & --- \\
\addlinespace[2pt]
Authorized signer signs an unwatermarked image & 100/100 & 100/100 & 0/100 (Score/Accept) \\
\addlinespace[2pt]
JPEG ($q{=}25$) after signing & 0/100 & 0/100 & 94/100 (Score) \\
75\% crop--resize after signing & 0/100 & 0/100 & 87/100 (Score) \\
\bottomrule
\end{tabular}
\par}
\vspace{0.3em}
{\small\raggedright
Signature columns count valid signatures. PP-Mark entries are the separately reported compliance-bypass results (Sec.~\ref{sec:sig-vs-zkp}) and Score detections (Table~\ref{tab:image-transforms}).
\par}
\end{table}

For exact canonical-image integrity under a trusted signer, an image-hash signature has the same modification-rejection role as Accept's image-hash binding and is simpler and preferable. Context-Sig also prevents context substitution, but an authorized signer can still sign an image without embedding a watermark. A signature authenticates the signer's assertion; it does not verify that embedding occurred. Accept additionally requires both a passing watermark score and a valid receipt enforcing $\mathcal{P}_1$--$\mathcal{P}_4$, including sampled embedding consistency. Score provides transform-tolerant in-image detection.

\section{Empirical Verification of Assumption~\ref{assum:noise}}\label{app:assumption-verify}

We verify the two conditions of Assumption~\ref{assum:noise} on $n=100$ watermarked images by computing the DDIM inversion residual $\epsilon = \hat{z}_0 - z_0$ (channel~0) for each image, where $z_0$ is the watermarked latent saved during generation and $\hat{z}_0$ is the latent recovered via DDIM inversion from the decoded pixel image. We report results on both SD\,2.1 (latent $64{\times}64$) and SDXL (latent $128{\times}128$).

\paragraph{Zero-mean condition.}
On SD\,2.1 the per-image residual mean is $\bar{\epsilon} = +0.011 \pm 0.003$ (grand mean $\pm$ SE across 100 images), two orders of magnitude below the residual standard deviation $\sigma_\epsilon \approx 0.50$, so $|\bar{\epsilon}|/\sigma_\epsilon \approx 0.02$.
On SDXL the grand mean is $\bar{\epsilon} = -0.022 \pm 0.004$ (95\%~CI $[-0.031,-0.013]$) with $\sigma_\epsilon \approx 0.44$, yielding $|\bar{\epsilon}|/\sigma_\epsilon \approx 0.05$.
Each channel-0 residual contains 4,096 entries on SD\,2.1 and 16,384 on SDXL. The bias is statistically detectable across the 100 images but small relative to both the residual standard deviation and $\alpha_{\mathrm{eff}}$; its magnitude alone does not establish a zero effect on the score.

\paragraph{Bounded correlation condition.}
On SD\,2.1 the mean Pearson correlation is $|\rho(\epsilon, s)| = 0.30$ (median~$0.30$, max~$0.48$).
On SDXL the correlation is larger: $|\rho| = 0.54$ (median~$0.55$, max~$0.66$). This two-model comparison does not isolate latent dimensionality as the cause of the difference.
Per-image correlations are negative in every case in both models, so the attenuation acts as a multiplicative shrinkage on the score rather than as random noise.
The per-image $p$-values are small (all $<\!0.05$ on SDXL). We therefore report correlation magnitudes and observed score margins; correlation below one alone is not a sufficient detection guarantee.
On SD\,2.1 the watermark signal contribution $\alpha_{\mathrm{eff}}\sqrt{N}=30.7$ is an expectation reference, not a hard score maximum. The observed mean is $25.7$ (approximately $16\%$ lower), and the minimum score is $17.4$, about $7.0\times$ the threshold $\tau=2.47$ across the 100 images.
On SDXL the observed mean clean score is $13.6$ ($\mathrm{score}/\tau \approx 5.5$), again with every clean image crossing $\tau$ at FPR${\leq}$1\%.
These measurements support the residual approximation on the evaluated images; both models retain 100/100 detections despite their different score margins.

Figure~\ref{fig:assumption-verify} overlays the SD\,2.1 and SDXL distributions on shared axes: (a)~per-image residual means are centered near zero for both models ($\bar{\epsilon}=+0.011$ on SD\,2.1, $-0.022$ on SDXL), and (b)~per-image $|\rho(\epsilon, s)|$ concentrates around $0.30$ for SD\,2.1 and shifts to approximately $0.54$ for SDXL while remaining bounded well below~$1$.

\begin{figure}[!htbp]
\centering
\includegraphics[width=0.8\linewidth]{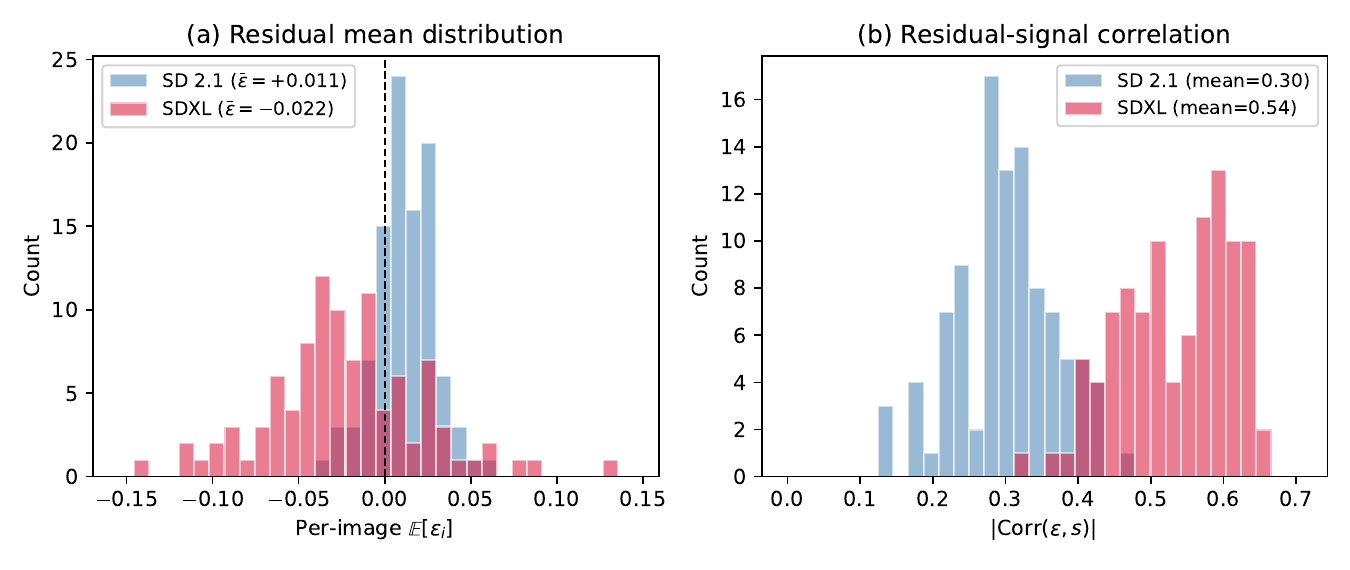}
\caption{Empirical assessment of Assumption~\ref{assum:noise} on SD\,2.1 and SDXL ($n{=}100$ each, overlaid). (a)~Per-image residual means cluster near zero. (b)~Mean $|\mathrm{Corr}(\epsilon,s)|$ is $0.30$ on SD\,2.1 and $0.54$ on SDXL; the two-model comparison does not isolate the cause of this difference.}
\label{fig:assumption-verify}
\end{figure}

\section{Ablation Study}
\subsection{Sample Size and Statistical Reliability}\label{app:sample-size}
\begin{figure}[t]
    \centering
    \begin{subfigure}{0.48\linewidth}
        \centering
        \includegraphics[width=\linewidth]{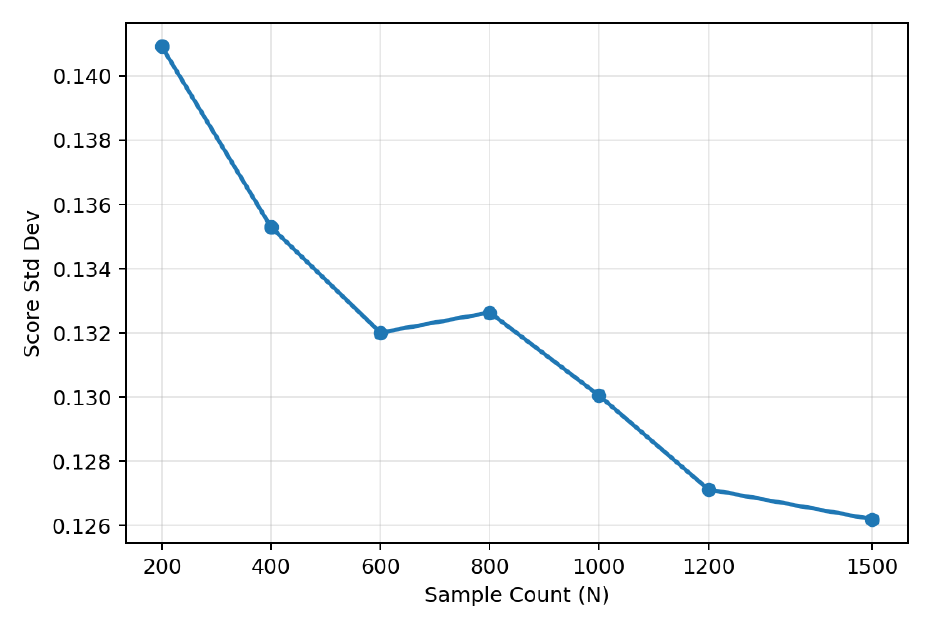}
        \caption{Score concentration: standard deviation vs.\ sample count $N$.}
        \label{fig:score-std-n}
    \end{subfigure}
    \hfill
    \begin{subfigure}{0.48\linewidth}
        \centering
        \includegraphics[width=\linewidth]{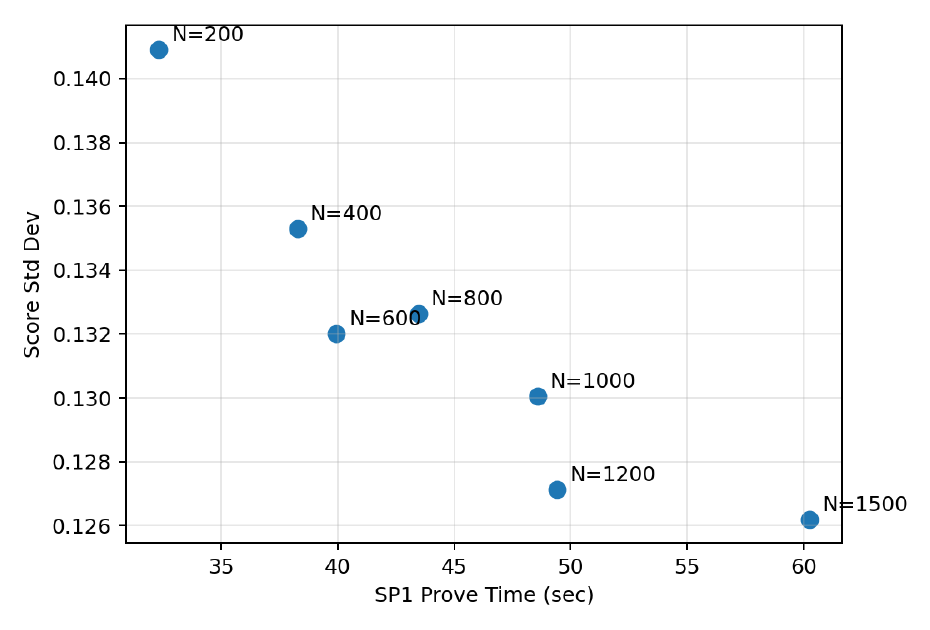}
        \caption{Knee trade-off between proving cost and score variability.}
        \label{fig:knee-tradeoff}
    \end{subfigure}
    \caption{Sample size ablation. (a) Score variance decreases as $N$ grows; (b) knee point identifies $N{=}1000$ as the trade-off between proving cost and score stability.}
    \label{fig:sample-size-ablation}
\end{figure}

We analyze the trade-off between statistical reliability and computational overhead by varying the number of sampled latent positions $N \in \{200, \dots, 1500\}$. We identify $N=1000$ as the reasonable point where statistical stability is maximized without incurring excessive prover latency or bandwidth costs.

\textbf{Statistical Concentration:} Theoretically, the detection score (cosine similarity) acts as a smooth function of bounded aggregates, suggesting Hoeffding-sense concentration. Empirically, the score's standard deviation across images decreases by 7.7\% as $N$ increases from 200 to 1000 ($0.1409 \rightarrow 0.1301$). However, increasing $N$ further to 1500 yields only a marginal 3.0\% additional reduction, indicating diminishing returns in variance reduction.
  
\begin{table}[H]
\centering
\caption{SP1 timing vs.\ sample count $N$. Submitted v5 measurements are retained; the final row shows the latest configuration with explicitly marked time estimates.}
\label{tab:sp1-k-sweep-timing}
{\ppmarksingletablesetup
  \begin{tabular}{c S[table-format=2.3, round-precision=3] S[table-format=1.3, round-precision=3]}
    \toprule
    \textbf{N} & {\textbf{Prove / issue (s)}} & {\textbf{Verify (s)}} \\
    \midrule
    200  & 32.309 & 4.468 \\
    400  & 38.278 & 5.640 \\
    600  & 39.932 & 5.886 \\
    800  & 43.465 & 6.651 \\
    1000 & 48.572 & 7.589 \\
    1200 & 49.425 & 8.877 \\
    1500 & 60.272 & 9.802 \\
    \midrule
    $1000^{\dagger}$ & \multicolumn{1}{c}{$17.0^{\dagger}$} & \multicolumn{1}{c}{$4.0^{\dagger}$} \\
    \bottomrule
  \end{tabular}
\par}
\vspace{2pt}
{\small\raggedright $^{\dagger}$SP1 v6.3.1, latest circuit implementation. The 17.0\,s, 4.0\,s values are a projected fresh-process receipt-issuance phase total, including setup, execution, proving, self-verification, and serialization.\par
\par}
\end{table}

\begin{figure}[H]
\centering
\includegraphics[width=0.6\linewidth, trim=0 0 0 25pt, clip]{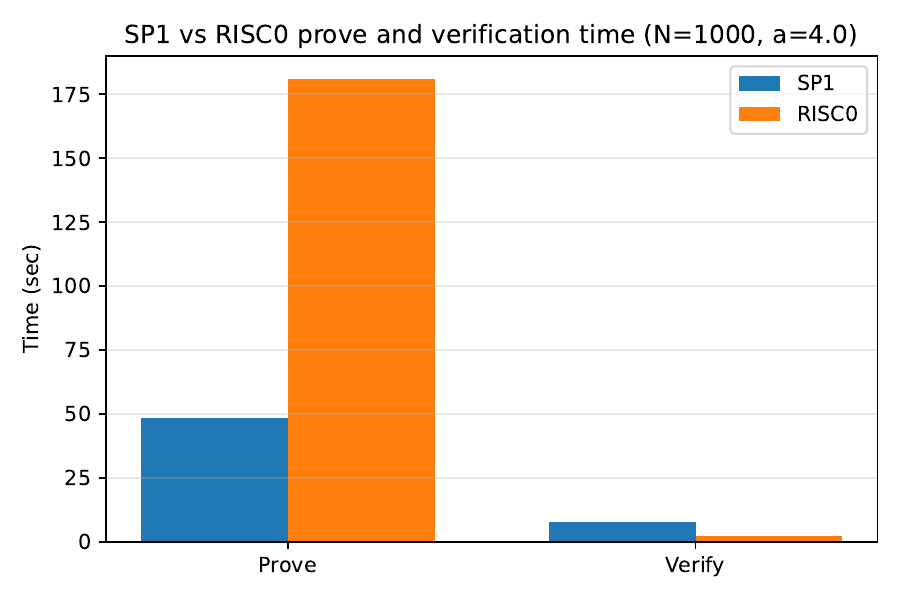}
\caption{Submitted prover and verifier timings at $N{=}1000$ ($\alpha{=}4.0$). SP1 uses a single H100 GPU for proving with CPU receipt verification; RISC0 uses the same SD2.1 setup and public inputs with CPU receipt verification.}
\label{fig:sp1-vs-risc0-timing}
\end{figure}

\textbf{Computational Cost:} In the submitted SP1 v5 configuration, prover and verification times increase with $N$ (Table~\ref{tab:sp1-k-sweep-timing}): proving rises from 32.3\,s ($N=200$) to 48.6\,s ($N=1000$). Figure~\ref{fig:sp1-vs-risc0-timing} retains the submitted SP1/RISC0 comparison. The appended v6.3.1 row uses a different implementation and timing boundary and is not part of the original $N$ sweep.

\subsection{Runtime Comparison}\label{app:runtime}
Table~\ref{tab:runtime} retains the submitted SD2.1 embedding and detection timings and updates the Accept row to the latest rebuttal configuration. Semantic methods embed during diffusion generation; post-hoc methods apply a separate encoder. Updated receipt issuance is approximately 17.0\,s projected at $N=1000$. Table~\ref{tab:sp1-k-sweep-timing} gives a conditional estimate of standalone receipt verification under unchanged fixed overhead. Total Accept-verification time has not been jointly measured and is left unreported.

\begin{table}[H]
\centering
\caption{Runtime comparison on SD2.1. Submitted method timings are retained; the Accept row uses the latest rebuttal receipt-issuance value.}
\label{tab:runtime}
{\small
\setlength{\tabcolsep}{6pt}%
\renewcommand{\arraystretch}{1.05}%
\begin{tabular}{lcc}
\toprule
Method & Embed (s) & Verify (s) \\
\midrule
Tree-Ring & 0.89 & 0.75 \\
Gaussian Shading & 3.10 & 0.72 \\
RingID & 0.89 & 0.72 \\
WIND & 0.89 & 1.10 \\
Stable Signature & 0.91 & 0.008 \\
HiDDeN & 0.001 & 0.001 \\
TrustMark & 0.013 & 0.004 \\
InvisMark & 0.002 & 0.009 \\
\midrule
PP-Mark (score) & 0.89 & 0.72 \\
PP-Mark (accept, v6.3.1) & 0.89\,{+}\,17.0$^{\dagger}$ & 0.72\,{+}\,4.0$^{\dagger}$ \\
\bottomrule
\end{tabular}
}
\vspace{2pt}

{\small\raggedright $^{\dagger}$The 17.0\,s 4.0\,s  are projected receipt issuance, including initialization and self-verification, rather than standalone proving. The displayed sum combines the submitted embedding time with that projection; it is not a jointly timed run. Total Accept verification is not jointly measured; the separate receipt-verification estimate is in Table~\ref{tab:sp1-k-sweep-timing}.\par}
\end{table}

\subsection{SP1 Receipt Size}\label{app:proof-size}
Table~\ref{tab:sp1-receipt-size} retains the submitted size sweep and appends the latest SP1 v6.3.1 Core artifact (2.81\,MB at $N=1000$). Table~\ref{tab:proof-backend-comparison} reproduces the final rebuttal's backend comparison. The post-submission implementation uses an optimized full-trace streaming commitment; the original Merkle configuration remains the basis of the submitted sweep. Version, implementation, and timing-boundary differences preclude attributing the change to the SP1 release alone.

\begin{table}[H]
\centering
\caption{SP1 receipt byte sizes across $N$ (SD\,2.1, $\alpha{=}4.0$). Submitted rows are retained, followed by the latest rebuttal configuration.}
\label{tab:sp1-receipt-size}
{\ppmarksingletablesetup
  \begin{tabular}{c S[table-format=4.0, round-precision=0] S[table-format=4.2, round-precision=2] S[table-format=2.2, round-precision=2]}
    \toprule
    \textbf{N} & {\makecell{\textbf{Public}\\ \textbf{inputs (B)}}} & {\makecell{\textbf{Witness}\\ \textbf{(KB)}}} & {\makecell{\textbf{Receipt}\\ \textbf{(MB)}}} \\
    \midrule
    300  & 674 &  254.69 & 12.24 \\
    400  & 664 &  339.21 & 16.87 \\
    600  & 665 &  553.15 & 24.81 \\
    1000 & 666 &  921.60 & 40.68 \\
    1500 & 666 & 1493.57 & 65.96 \\
    \midrule
    $1000^{\dagger}$ & \multicolumn{1}{c}{1,004} & \multicolumn{1}{c}{117.0} & 2.81 \\
    \bottomrule
  \end{tabular}
\par}
\vspace{2pt}
{\small\raggedright $^{\dagger}$Latest SP1 v6.3.1 Core implementation, updated public-input, witness, receipt sizes.\par}
\end{table}

\begin{table}[H]
\centering
\caption{Proof-backend comparison from the final rebuttal ($N=1000$), including the latest circuit implementation.}
\label{tab:proof-backend-comparison}
{\ppmarksingletablesetup
\begin{tabular}{lrr}
\toprule
Proof configuration & Time (s) & Artifact \\
\midrule
SP1 v6.3.1 $^{\dagger}$ & $17.0^{\dagger}$ & 2.81 MB \\
SP1 v5 & 48.572 & 40.68 MB \\
RISC0 & 180.9 & 40.51 MB \\
PLONK & 312.677 & 3,874 B \\
Groth16 & 438.380 & 1,499 B \\
\bottomrule
\end{tabular}
\par}
\vspace{2pt}
{\small\raggedright $^{\dagger}$SP1 v6.3.1 uses a Core receipt. Its 17.0\,s value is the projected fresh-process issuance phase total; the 2.81\,MB artifact size is measured. The other rows retain the times and sizes reported.\par}
\end{table}

\subsection{Watermark Strength}
We study the effect of watermark strength $\alpha$ on visual quality while fixing the sampling budget at $N=1000$. We report BRISQUE (lower is better) and focus on the 95th percentile across images to capture tail behavior. For each $\alpha \in \{3.0, 3.5, 4.0, 4.5\}$ we evaluate 100 watermarked images under identical settings. 

The BRISQUE p95 is relatively stable from $\alpha=3.0$ to $4.0$ (roughly $34.1$--$34.9$), but it jumps at $\alpha=4.5$ to $40.5$, indicating a heavier quality tail. This suggests that pushing $\alpha$ beyond 4.0 yields a disproportionate increase in worst-case artifacts without a clear compensating gain in reliability, given that $N=1000$ already stabilizes detection. We therefore select $\alpha=4.0$ as a conservative balance between robustness and perceptual quality (Fig.~\ref{fig:alpha-brisque-p95}).

\begin{figure}[H]
  \centering
  \includegraphics[width=0.6\linewidth]{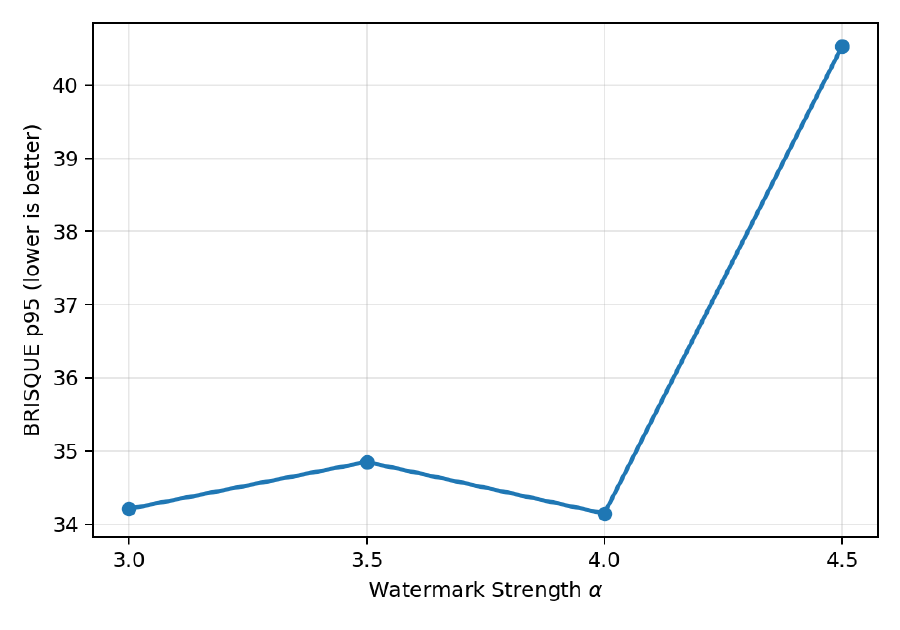}
  \caption{BRISQUE p95 vs.\ watermark strength $\alpha$ (lower is better).}
  \label{fig:alpha-brisque-p95}
\end{figure}

\section{Computational Resources}\label{app:compute}
The submitted experiments were conducted on an HPE server equipped with a single NVIDIA H100 PCIe GPU (80\,GB) and an Intel\textsuperscript{\textregistered} Xeon\textsuperscript{\textregistered} Gold 6542Y CPU (24 cores, 48 threads, 2.9\,GHz, 60\,MB cache). The GPU handles image generation, watermark embedding, latent inversion, attack optimization (PGD, SPSA, imprint forgery, regeneration), and SP1 STARK proving; SP1 receipt verification runs on the CPU. Per-method runtimes are in Appendix~\ref{app:runtime}, and the submitted $N$-dependent prove/verify sweep is in Appendix~\ref{app:sample-size} (Table~\ref{tab:sp1-k-sweep-timing}). Post-submission configurations are distinguished in Table~\ref{tab:proof-backend-comparison}.

\section{Imprint Forgery Details}\label{app:imprint-forgery-details}

This appendix complements Sec.~\ref{sec:imprint-forgery} with the full step-wise breakdowns for both the Direct and Transfer imprint-forgery experiments, along with the bit-accuracy calibration used for binary-only detectors (TrustMark, InvisMark). The aggregated cross-method comparison (FAR with 95\% confidence intervals, $\Delta\%$ ratios, and one-sided Fisher exact tests) and the corresponding forest plot remain in the main paper (Table~\ref{tab:transfer-summary}, Figure~\ref{fig:transfer-forest}).

\subsection{Direct Imprint Forgery}\label{app:direct-imprint}

We sanity-check the standard imprint-forgery optimizer on Tree-Ring (TR) and Gaussian Shading (GS) themselves to confirm the attack pipeline functions correctly before applying it cross-method. For each fixed cover--reference pair, we run the optimizer for $T \in \{50, 100, 150\}$ steps and report the Forgery Acceptance Rate at 1\% FPR (FAR@1\%FPR, $n{=}100$ per step) using each method's native detector with thresholds calibrated to 1\% FPR on clean (non-watermarked) images.

\sisetup{
  detect-weight=true,
  table-number-alignment=center,
  round-mode=places,
  round-precision=2
}

\begin{table}[H]
\centering
\caption{Direct vulnerability under targeted imprint forgery. FAR@1\%FPR (targeted verify); all methods evaluated at FPR=1\% (clean-calibrated).}
\label{tab:direct-vulnerability}
{\small
  \setlength{\tabcolsep}{6pt}%
  \renewcommand{\arraystretch}{1.05}%
  \begin{tabular}{
  c
  S[table-format=1.3, round-mode=none] S[table-format=1.3, round-mode=none]
  }
    \toprule
    &
    \multicolumn{1}{c}{\textbf{TR}} &
    \multicolumn{1}{c}{\textbf{GS}} \\
    \cmidrule(lr){2-2}\cmidrule(lr){3-3}
    \textbf{Step}
    & {\textbf{FAR@1\%FPR}} & {\textbf{FAR@1\%FPR}} \\
    \midrule
    50 & 0.92 & 1.00 \\
    100 & 0.97 & 1.00 \\
    150 & 0.97 & 1.00 \\
    Avg. & 0.953 & 1.00 \\
    \bottomrule
  \end{tabular}
}
\end{table}

Both methods are highly vulnerable across all step budgets. Tree-Ring's FAR rises from 0.92 (50 steps) to 0.97 (100--150 steps), averaging 0.953 across the three step counts; Gaussian Shading saturates at 1.00 across all step counts, indicating near-perfect imprint transfer onto arbitrary covers. These rates confirm the optimizer reliably extracts and re-injects watermark imprints, providing a sound baseline for cross-method evaluation in App.~\ref{app:transfer-imprint}.

\subsection{Transfer Imprint Forgery}\label{app:transfer-imprint}

We extend the imprint-forgery pipeline cross-method: imprints extracted from a TR- or GS-watermarked reference are optimized onto each fixed cover and evaluated against each target method's native detector. All thresholds are calibrated to 1\% FPR on clean images, so the reported differences reflect detector susceptibility under identical pairings, step budgets, and calibration. Step budget is $T \in \{50, 100, 150\}$ with $n{=}100$ pairs per step.

\paragraph{Bit-accuracy calibration for binary-only detectors.}
TrustMark \cite{bui2023trustmark} and InvisMark \cite{xu2024invismark} expose only binary detection flags through their public APIs; standard threshold-based FPR calibration is therefore not directly available. We instead use bit accuracy, the fraction of correctly decoded payload bits, as a continuous score, enabling threshold-based FPR=1\% calibration on clean images. The resulting calibration thresholds are $\tau{=}0.58$ for TrustMark (clean bit accuracy ${\approx}0.48$, watermarked ${\approx}1.00$) and $\tau{=}0.49$ for InvisMark (clean ${\approx}0.41$, watermarked ${\approx}0.96$). The large clean--watermarked gaps confirm bit accuracy provides a reliable continuous score for these methods despite their binary-only public interfaces.

\begin{table}[H]
  \centering
  \caption{Transfer results by method and step. FAR@1\%FPR (targeted verify); all methods evaluated at FPR=1\% (clean-calibrated). TrustMark and InvisMark use bit accuracy as the continuous score.}
  \label{tab:transfer-steps}
  {\footnotesize
    \setlength{\tabcolsep}{2.0pt}%
    \renewcommand{\arraystretch}{1.04}%
    \begin{tabular*}{\linewidth}{@{\extracolsep{\fill}}
    c c
    S[table-format=1.2, round-precision=2] S[table-format=1.2, round-precision=2] S[table-format=1.2, round-precision=2] S[table-format=1.2, round-precision=2]
    S[table-format=1.2, round-precision=2] S[table-format=1.2, round-precision=2] S[table-format=1.2, round-precision=2] S[table-format=1.2, round-precision=2]
    @{}}
      \toprule
      \textbf{Source} & \textbf{Step}
      & {\textbf{TrustMark}} & {\textbf{InvisMark}} & {\textbf{RingID}} & {\textbf{StableSig}}
      & {\textbf{HiDDeN}} & {\textbf{WIND}} & {\textbf{PP-Mark$_{\mathrm{score}}$}} & {\textbf{PP-Mark$_{\mathrm{accept}}$}} \\
      \midrule

      \multirow{3}{*}{TR} & 50 & 0.05 & 0.06 & 0.01 & 0.04 & 0.04 & 0.01 & 0.00 & 0.00 \\
                          & 100 & 0.03 & 0.01 & 0.10 & 0.04 & 0.02 & 0.01 & 0.00 & 0.00 \\
                          & 150 & 0.03 & 0.02 & 0.01 & 0.02 & 0.01 & 0.01 & 0.01 & 0.00 \\
      \addlinespace
      \multirow{3}{*}{GS} & 50 & 0.01 & 0.03 & 0.01 & 0.01 & 0.01 & 0.01 & 0.00 & 0.00 \\
                          & 100 & 0.02 & 0.01 & 0.02 & 0.02 & 0.03 & 0.01 & 0.01 & 0.00 \\
                          & 150 & 0.05 & 0.01 & 0.01 & 0.02 & 0.03 & 0.01 & 0.01 & 0.00 \\

      \bottomrule
    \end{tabular*}
  }
\end{table}

Across all step budgets and both imprint sources (TR, GS), PP-Mark$_{\mathrm{score}}$ remains at 0.00--0.01, indicating that the standard imprint-forgery pipeline rarely pushes its detection score beyond the 1\% calibration floor. The peak baseline value is RingID at 0.10 under TR-sourced imprints with 100 optimization steps; TrustMark, InvisMark, StableSig, and HiDDeN range across 0.01--0.06 across the various source--step combinations. Among baselines, WIND remains the closest competitor to PP-Mark$_{\mathrm{score}}$ with FAR consistently at 0.01, while PP-Mark$_{\mathrm{accept}}$ stays at 0.00 across all settings, confirming that the cryptographic verification rejects any imprint-altered image regardless of score-domain optimization.

\section{Evaluation on SDXL}\label{app:sdxl}

We further evaluate PP-Mark on Stable Diffusion~XL (SDXL~1.0), a second latent-diffusion model sharing the initial-latent embedding and DDIM-inversion interface with Stable Diffusion~2.1 (SD2.1). SDXL generates $1024{\times}1024$ images with a $128{\times}128$ latent space (four times the dimensionality of SD2.1's $64{\times}64$ latent). Among the baselines from the main experiments, only TrustMark and InvisMark are natively compatible with SDXL's resolution; the remaining methods (RingID, StableSig, HiDDeN, WIND) either predate SDXL or require resolution downsampling that renders their outputs non-comparable. We therefore scope this evaluation to PP-Mark, TrustMark, and InvisMark.

\paragraph{Setup.}
We generate 100 clean and 100 watermarked SDXL images per method (50 DDIM steps). PP-Mark uses $\alpha{=}4.0$, sample count $N{=}1000$ with the 128${\times}$128 latent. All thresholds are calibrated at FPR${\leq}$1\% on SDXL clean images ($n{=}100$).

\paragraph{Detection.}
All three methods achieve TPR$=$1.00 on their respective SDXL watermarked images ($n{=}100$ each). PP-Mark's watermarked scores (mean$=$13.63) exceed $\tau{=}2.47$ by $5.5{\times}$, confirming that the larger latent space preserves detection margin.

\paragraph{Direct imprint vulnerability (TR/GS sanity check).}
As in SD2.1 (Table~\ref{tab:direct-vulnerability}), we first verify that the SDXL imprint-forgery pipeline reaches each semantic watermark's own detector. Table~\ref{tab:sdxl-direct-vulnerability} reports FAR@1\%FPR for Tree-Ring and Gaussian Shading under direct attack ($n{=}100$ per step). Both are highly vulnerable (TR: $0.83$--$0.91$, GS: $1.00$ across steps), confirming the SDXL attack pipeline is functional before evaluating pixel-space and post-hoc methods below.

\begin{table}[!htbp]
    \caption{Direct vulnerability under targeted imprint forgery on SDXL (sanity check).
    FAR@1\%FPR (targeted verify); TR and GS evaluated at their own clean-calibrated thresholds.}
    \label{tab:sdxl-direct-vulnerability}
    \begin{center}
      \begin{small}
        \begin{sc}
          \begin{tabular}{
          c
          S[table-format=1.3, round-mode=none] S[table-format=1.3, round-mode=none]
          }
            \toprule
            &
            \multicolumn{1}{c}{\textbf{TR}} &
            \multicolumn{1}{c}{\textbf{GS}} \\
            \cmidrule(lr){2-2}\cmidrule(lr){3-3}
            \textbf{Step}
            & {\textbf{FAR@1\%FPR}} & {\textbf{FAR@1\%FPR}} \\
            \midrule
            50  & 0.83 & 1.00 \\
            100 & 0.90 & 1.00 \\
            150 & 0.91 & 1.00 \\
            Avg. & 0.880 & 1.00 \\
            \bottomrule
          \end{tabular}
        \end{sc}
      \end{small}
    \end{center}
    \vskip -0.1in
  \end{table}

\paragraph{Transfer imprint forgery.}
We apply the same imprint-forgery protocol as Section~\ref{sec:imprint-forgery} using Tree-Ring (TR) and Gaussian Shading (GS) as imprint sources ($n{=}100$ per step). Table~\ref{tab:sdxl-transfer} reports FAR at the calibrated thresholds.

PP-Mark$_{\mathrm{score}}$ achieves FAR$=$0.00--0.02, consistent with SD2.1 (0.00--0.01). PP-Mark$_{\mathrm{accept}}$ remains at 0.00 across all conditions, as the ZKP proof verification deterministically rejects forged images. TrustMark remains robust (FAR$\leq$0.05). InvisMark shows materially higher vulnerability on SDXL (FAR$=$0.09--0.17) compared to SD2.1 (0.01--0.06), suggesting that SDXL's larger latent space amplifies imprint transfer susceptibility for pixel-space decoders. Figure~\ref{fig:sdxl-transfer-forest} visualizes pooled FAR with Clopper--Pearson 95\% confidence intervals: PP-Mark$_{\mathrm{score}}$ 0.01 [0.00, 0.03], TrustMark 0.03 [0.01, 0.05], InvisMark 0.13 [0.09, 0.17] for TR (GS similar).

\begin{table}[H]
    \caption{Transfer imprint forgery on SDXL.
    FAR at FPR${\leq}$1\% thresholds calibrated on SDXL clean ($n{=}100$ per step).}
    \label{tab:sdxl-transfer}
    \centering
      {\ppmarksingletablesetup
          \begin{tabular}{
          l c
          S[table-format=1.2, round-precision=2] S[table-format=1.2, round-precision=2]
          }
            \toprule
            & &
            \multicolumn{1}{c}{\textbf{TR}} &
            \multicolumn{1}{c}{\textbf{GS}} \\
            \cmidrule(lr){3-3}\cmidrule(lr){4-4}
            \textbf{Method} & \textbf{Step}
            & {\textbf{FAR@1\%FPR}} & {\textbf{FAR@1\%FPR}} \\
            \midrule

            \multirow{3}{*}{TrustMark} & 50  & 0.03 & 0.00 \\
                                       & 100 & 0.05 & 0.02 \\
                                       & 150 & 0.00 & 0.01 \\
              \addlinespace

            \multirow{3}{*}{InvisMark} & 50  & 0.16 & 0.15 \\
                                       & 100 & 0.09 & 0.11 \\
                                       & 150 & 0.14 & 0.17 \\
              \addlinespace

            \multirow{3}{*}{PP-Mark$_{\mathrm{score}}$} & 50  & 0.02 & 0.01 \\
                                       & 100 & 0.00 & 0.01 \\
                                       & 150 & 0.01 & 0.01 \\
              \addlinespace

            \multirow{3}{*}{PP-Mark$_{\mathrm{accept}}$} & 50  & 0.00 & 0.00 \\
                                       & 100 & 0.00 & 0.00 \\
                                       & 150 & 0.00 & 0.00 \\

            \bottomrule
          \end{tabular}
      \par}
  \end{table}

\begin{figure}[!htbp]
    \centering
    \includegraphics[width=0.8\linewidth]{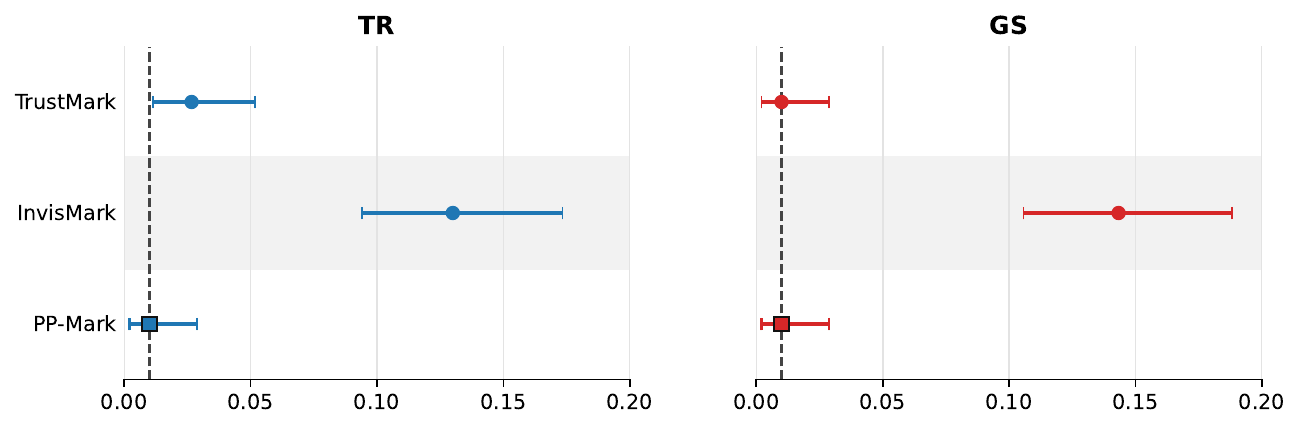}
    \caption{SDXL transfer imprint forgery: pooled FAR across 3 steps ($n{=}300$ per source). Error bars denote Clopper--Pearson 95\% CI; the dashed line marks PP-Mark$_{\mathrm{score}}$.}
    \label{fig:sdxl-transfer-forest}
\end{figure}

\begin{table}[H]
  \caption{White-box PGD forgery on SDXL ($\epsilon{=}2/255$, $T{=}150$, $n{=}100$). FAR at FPR${\leq}$1\% under any-step acceptance; Avg.\ PSNR averaged against the original cover image at the submitted-image step (first-breach step for acceptances, final step otherwise).}
  \label{tab:sdxl-wb}
  \centering
  {\ppmarksingletablesetup
        \begin{tabular}{l
          S[table-format=1.2, round-precision=2]
          S[table-format=2.2, round-precision=2]
          S[table-format=2.2, round-precision=2]
        }
          \toprule
          \textbf{Method}
          & {\makecell{\textbf{FAR}}}
          & {\makecell{\textbf{Avg.\ Steps}\\ \textbf{to FA}}}
          & {\makecell{\textbf{Avg.\ }\\ \textbf{PSNR $\uparrow$}}} \\
          \midrule
          TrustMark                          & 1.00 & 2.78  & 80.28 \\
          InvisMark                          & 1.00 & 3.83  & 71.22 \\
          PP-Mark$_{\mathrm{score}}$         & 0.11 & 29.55 & 85.63 \\
          PP-Mark$_{\mathrm{accept}}$        & 0.00 & \multicolumn{1}{c}{--} & \multicolumn{1}{c}{--} \\
          \bottomrule
        \end{tabular}
  \par}
\end{table}

\paragraph{White-box PGD forgery.}
We apply the same PGD attack ($\epsilon{=}2/255$, $T{=}150$) to all three methods on SDXL, using the any-step acceptance criterion (attack succeeds if $\mathrm{score}{\geq}\tau$ at any step) as in SD2.1. Table~\ref{tab:sdxl-wb} reports outcomes and Figure~\ref{fig:wb_forge_far_curves_sdxl} plots the cumulative FAR trajectories.

TrustMark and InvisMark are breached at 100\% within 3--4 steps on average, consistent with SD2.1. PP-Mark$_{\mathrm{score}}$ achieves FAR$=$0.11 ($n{=}100$, 95\% Clopper--Pearson CI $[5.62\%, 18.83\%]$), a substantial improvement over SD2.1 (FAR$=$0.76). The 4${\times}$ larger latent dimensionality (65,536 vs 16,384) may contribute to this difference, but the two-model comparison does not isolate its effect. The models also differ in PSNR: PP-Mark$_{\mathrm{score}}$'s average rises from 67.69 (SD2.1) to 85.63 (SDXL), whereas TrustMark/InvisMark PSNR moves only slightly (+4.5/$-$1.5 dB). Although the attacker fills the full $\ell_\infty$ budget ($\epsilon{=}2/255$), the gradient in the larger SDXL latent is spatially sparse, so most pixels remain unperturbed and the effective MSE stays small, consistent with the lower FAR. PP-Mark$_{\mathrm{accept}}$ remains at 0.00 by construction.

\begin{figure}[H]
    \centering
    \includegraphics[width=0.6\linewidth]{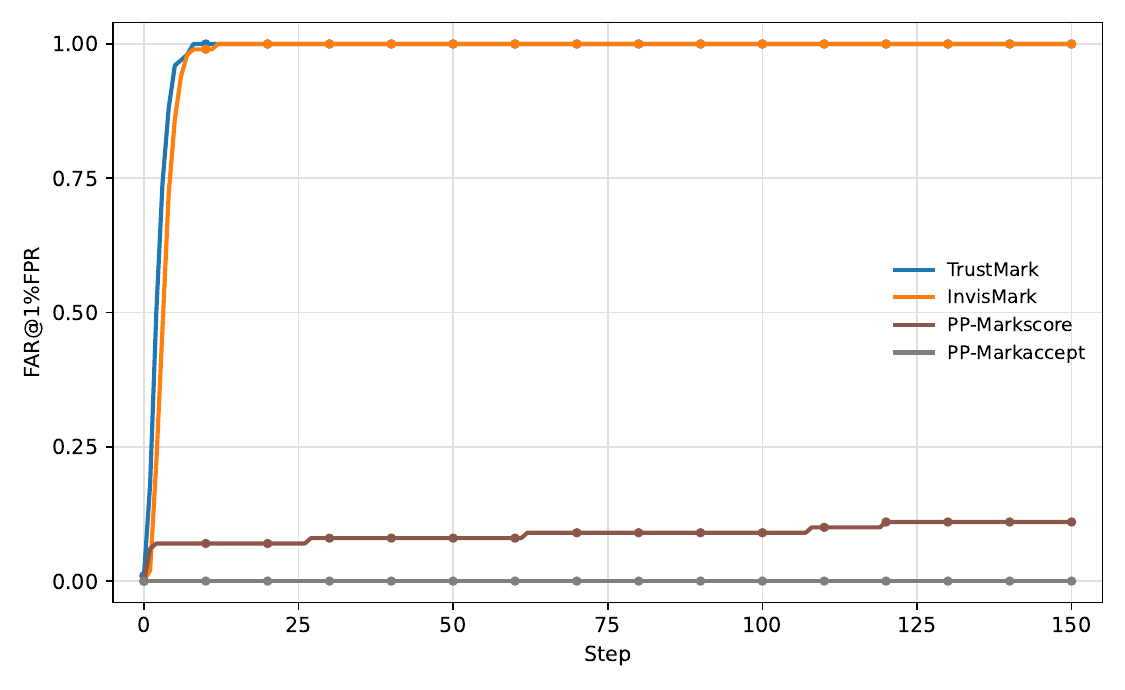}
    \caption{SDXL cumulative FAR@1\%FPR vs.\ optimization step under white-box PGD ($\epsilon{=}2/255$, $T{=}150$, $n{=}100$).}
    \label{fig:wb_forge_far_curves_sdxl}
\end{figure}

\paragraph{Summary.}
On the evaluated SDXL samples, accept mode maintains FAR$=$0.00 under both imprint transfer and white-box PGD attacks, and score mode has white-box FAR$=$0.11 compared with 0.76 on SD2.1. These results cover two models sharing a latent-diffusion/DDIM-inversion interface; generalization to substantially different generative architectures remains untested.

  \begin{table}[H]
    \caption{Post-Regeneration attack detection rate at FPR=1\% thresholds; each noise step uses 100 images; noise step is 30/60/100.}
    \label{tab:zhao-regen-full100}
    \begin{center}
      \begin{small}
        \begin{sc}
          \begin{tabular}{
          l
          S[table-format=1.2, round-precision=2]
          S[table-format=1.2, round-precision=2]
          S[table-format=1.2, round-precision=2]
          }
            \toprule
            & \multicolumn{3}{c}{\textbf{Noise step}} \\
            \cmidrule(lr){2-4}
            \textbf{Method} & \multicolumn{1}{c}{\textbf{30steps}} & \multicolumn{1}{c}{\textbf{60steps}}
              & \multicolumn{1}{c}{\textbf{100steps}} \\
            \midrule
            StableSig & 0.04 & 0.02 & 0.03 \\
            HiDDeN & 0.10 & 0.06 & 0.04 \\
            TrustMark & 0.85 & 0.63 & 0.39 \\
            InvisMark & 0.01 & 0.05 & 0.02 \\
            Tree-Ring & 0.95 & 0.93 & 0.93 \\
            PP-Mark & 0.98 & 0.98 & 0.98 \\
            \bottomrule
          \end{tabular}
        \end{sc}
      \end{small}
    \end{center}
    \vskip -0.1in
  \end{table}

\section{Regeneration Attack}\label{sec:Regeneration Attack}
We evaluate detection rates after regeneration at each noise level to measure how well the watermark survives re-synthesis. Table~\ref{tab:zhao-regen-full100} reports detection rates after regeneration.

\begin{table}[H]
  \centering
  \caption{Steganalysis removal summary (TPR@1\%FPR) by K.}
  \label{tab:steg-removal-summary}
  {\small
    \setlength{\tabcolsep}{10pt}%
    \renewcommand{\arraystretch}{1.05}%
    \begin{tabular}{lcccc}
      \toprule
      & \multicolumn{4}{c}{\textbf{K}} \\
      \cmidrule(lr){2-5}
      \textbf{Method} & \textbf{10} & \textbf{100} & \textbf{500} & \textbf{1000} \\
      \midrule
      StableSig & 0.200 & 0.200 & 0.173 & 0.171 \\
      Tree-Ring & 0.190 & 0.020 & 0.038 & 0.035 \\
      TrustMark & 0.990 & 0.960 & 0.960 & 0.960 \\
      InvisMark & 1.000 & 1.000 & 1.000 & 1.000 \\
      WIND      & 1.000 & 1.000 & 1.000 & 1.000 \\
      PP-Mark   & 0.864 & 0.959 & 0.963 & 0.958 \\
      \bottomrule
    \end{tabular}
  }
\end{table}

\section{Steganalysis Attack} \label{Steganalysis Attack}
We sweep K=\{10, 100, 500, 1000\} and evaluate the post-attack detection rate at the calibrated FPR=1\% threshold. Table~\ref{tab:steg-removal-summary} shows that StableSignature and Tree-Ring detection drops sharply as K increases, indicating increasing removal effectiveness. In contrast, TrustMark, InvisMark, WIND, and PP-Mark remain highly detectable even at large K, because their watermark patterns are content-adaptive rather than fixed, making the averaged residual a poor estimate of any individual image's watermark.
Notably, for all content-adaptive methods the averaged residual converges to zero as K grows, confirming that steganalysis is fundamentally ineffective against this encoder class. The minor TPR variation across these methods (e.g.\ PP-Mark 0.96 vs.\ InvisMark 1.00 at $K{\ge}100$) reflects differences in score-to-threshold margin rather than actual robustness: PP-Mark's tighter calibration leaves less headroom for the residual noise added at small K, whereas InvisMark's wider margin absorbs it entirely.

\section{SPSA-based Black-Box Attacks}
We evaluate PP-Mark under gradient-free SPSA \cite{spall1992spsa} attacks in both removal and forgery settings.

\paragraph{Removal.}
Given a watermarked image and its binding metadata, the adversary minimizes the verifier's score via SPSA over 50 steps. The score drops below the calibrated threshold $\tau$ early, but visual quality collapses immediately (PSNR $\approx 4.7$--$6.9$). Under any reasonable quality constraint (PSNR $\ge 30$), no valid outputs are obtained. Figure~\ref{fig:imprint-removal-step5} shows the score and PSNR traces.

\begin{figure}[H]
    \centering
    \includegraphics[width=0.6\linewidth, trim=0 0 0 25pt, clip]{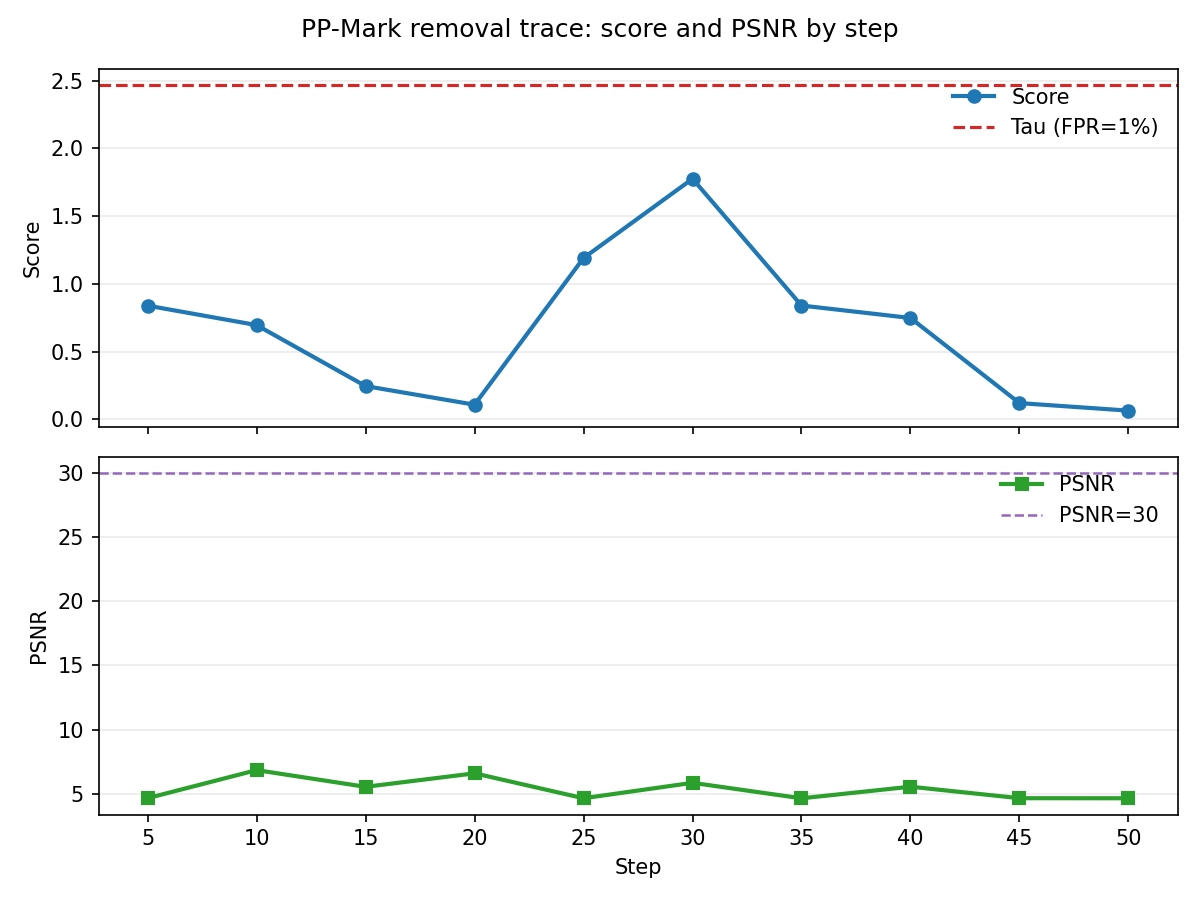}
    \caption{SPSA removal trace. Top: score vs step with $\tau$ (FPR=1\%); bottom: PSNR vs step.}
    \label{fig:imprint-removal-step5}
\end{figure}

\begin{table}[H]
    \caption{SPSA-based forgery attack on PP-Mark.}
    \label{tab:adaptive-instance}
    \centering
    {\ppmarksingletablesetup
          \begin{tabular}{
          c
          S[table-format=1.3, round-mode=none] S[table-format=1.2, round-precision=2]
          }
            \toprule
            \textbf{Step}
            & {\textbf{FAR@1\%FPR}} & {\textbf{PSNR$\uparrow$}} \\
            \midrule
            50 & 0.02 & 6.55 \\
            100 & 0.00 & 6.61 \\
            150 & 0.01 & 6.63 \\
            Avg. & 0.01 & 6.60 \\
            \bottomrule
          \end{tabular}
    \par}
\end{table}

\paragraph{Forgery.}
The adversary maximizes the verifier's score against a target binding via SPSA, using DDIM inversion to initialize the latent. Table~\ref{tab:adaptive-instance} summarizes the results: targeted FAR remains at 0.01 on average, and the optimization merely introduces destructive distortion (PSNR $\approx 6.6$).

\begin{table}[H]
\centering
\caption{Post-transformation detection rate (TPR at FPR${\leq}$1\%).
Individual transformations and the sequential pipeline are evaluated in score mode.} 
\label{tab:image-transforms}
{\ppmarksingletablesetup
\begin{tabular}{lcc}
\toprule  
Transform & SD\,2.1 & SDXL \\
\midrule
JPEG compression ($q{=}25$) & 0.94 & 1.00 \\
Gaussian blur ($8{\times}8$) & 1.00 & 1.00 \\
Additive noise ($\sigma{=}0.1$) & 0.86 & 0.95 \\
Brightness jitter (${\times}0.6$) & 0.98 & 1.00 \\
\midrule
Random crop + resize (75\%) & 0.87 & 0.74 \\
Rotation $75^\circ$ & 0.85 & 0.97 \\
\midrule
75\% crop $\to$ resize $\to$ JPEG ($q{=}25$) & $0.80^{\dagger}$ & --- \\
\bottomrule
\end{tabular}
\par}

\vspace{0.3em}
{\small\raggedright
$^{\dagger}$Success requires detection after both crop--resize and subsequent JPEG (80/100). We use the same 100 SD\,2.1 images, bicubic resizing, no survivor filtering, and the submitted FP32 DDIM-100 alignment with a frozen threshold. ---: not evaluated.
\par}
\end{table}

\section{Image Transformations} \label{Image Transformations}

We evaluate watermark robustness under individual and sequential image transformations
and report the \emph{score-mode} detection rate (TPR at FPR${\leq}$1\%).
Table~\ref{tab:image-transforms} summarizes results.

\textbf{Pixel-level transforms.}\enspace
PP-Mark maintains high detection under lossy compression, blur, noise, and brightness changes.
Even aggressive JPEG compression ($q{=}25$) preserves 94\% detection, and Gaussian blur achieves perfect retention.

\textbf{Crop and resize.}\enspace
Random cropping to 75\% of the original area followed by resizing to the original dimensions
is handled by a two-stage alignment combining integer-shift search and sub-pixel refinement.
This recovers 87\% detection on SD\,2.1 ($64{\times}64$ latent) and 0.74 on SDXL
($128{\times}128$ latent, $n{=}50$);
the VAE maps overlapping spatial regions to correlated latent coefficients,
and the refinement grid covers the residual sub-pixel misalignment.

\textbf{Rotation.}\enspace
Large rotations are recovered through a dedicated rotation-sync search over $\pm 90^\circ$.
Rotation at $75^\circ$ retains 0.85 detection on SD\,2.1
and 0.97 on SDXL; the larger $128{\times}128$ latent of SDXL
provides additional redundancy that favours sync recovery.
This single angle is reported as a representative worst-case large rotation;
the same procedure applies uniformly across angles.

\begin{figure}[H]
    \centering
    \begin{subfigure}[t]{0.36\linewidth}
      \centering
      \includegraphics[width=\linewidth]{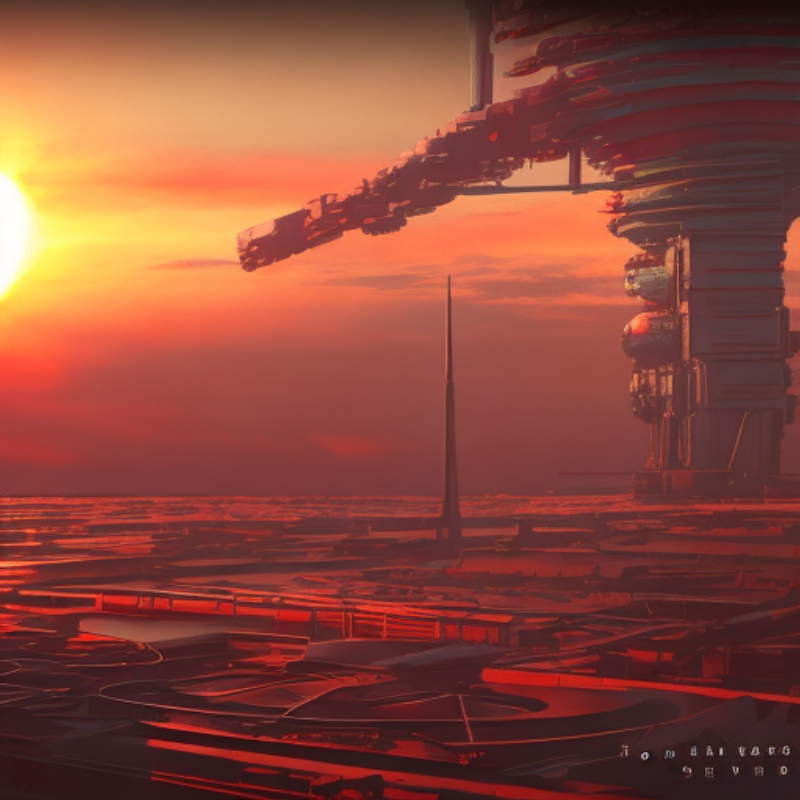}
    \end{subfigure}\hspace{0.01\linewidth}%
    \begin{subfigure}[t]{0.36\linewidth}
      \centering
      \includegraphics[width=\linewidth]{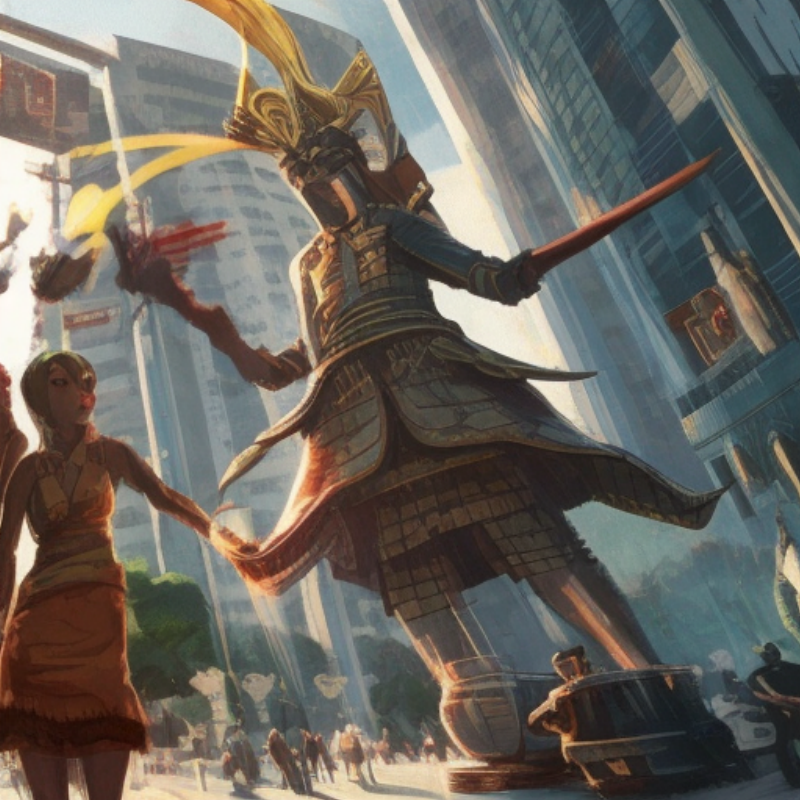}
    \end{subfigure}

    \vspace{0.2em}

    \begin{subfigure}[t]{0.36\linewidth}
      \centering
      \includegraphics[width=\linewidth]{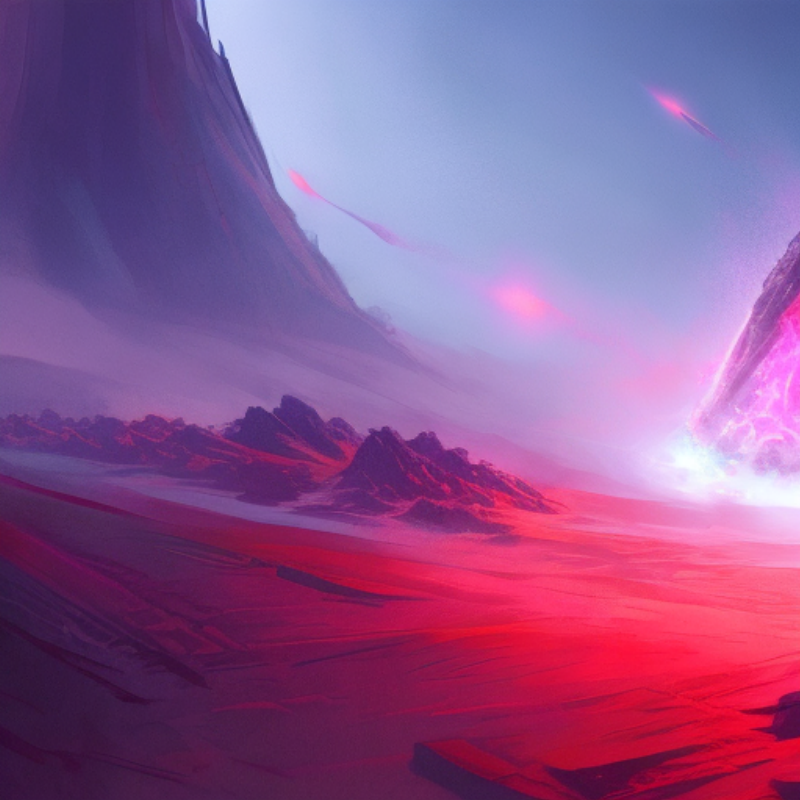}
    \end{subfigure}\hspace{0.01\linewidth}%
    \begin{subfigure}[t]{0.36\linewidth}
      \centering
      \includegraphics[width=\linewidth]{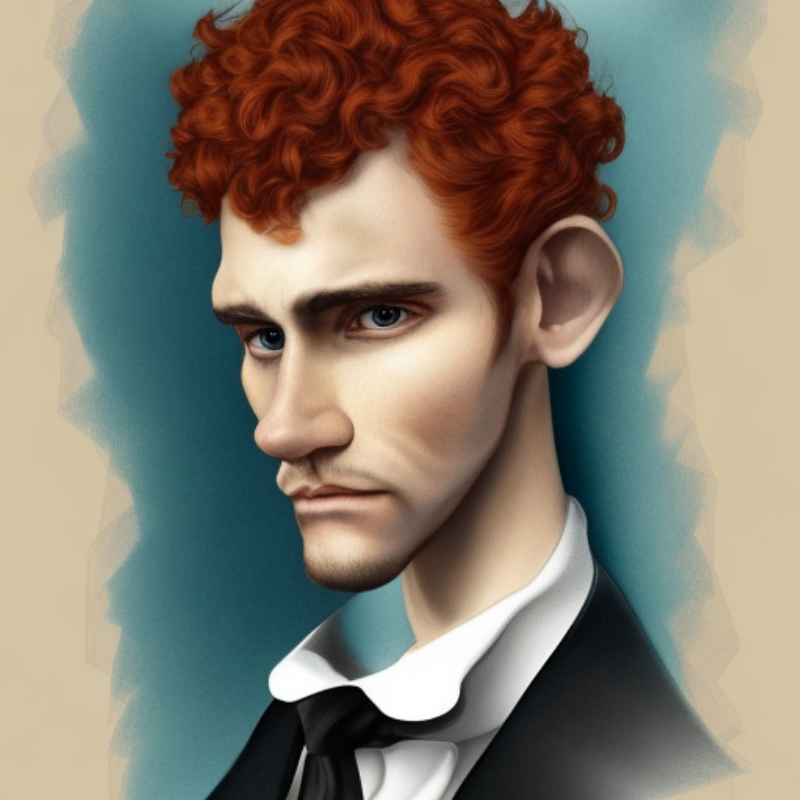}
    \end{subfigure}

    \caption{Examples of PP-Mark watermarked images}
    \label{fig:alpha4-qual}
\end{figure}

\newpage
\section{Image Quality and Distributional Fidelity}\label{Generation Distributional Fidelity}\label{app:quality}

Table~\ref{tab:quality-comparison} reports BRISQUE p95 for no-reference image quality, KID \cite{binkowski2018mmdgan} for distributional fidelity, and mean CLIP image--prompt cosine similarity (ViT-B/32) for semantic alignment. We use KID rather than FID because the 2,048-dimensional FID covariance is rank-deficient at this sample size.

\begin{table}[H]
\centering
\caption{Image quality comparison. KID is mean $\pm$ SD; arrows indicate preferred directions.}
\label{tab:kid-500}\label{tab:quality-comparison}
{\ppmarksingletablesetup
\begin{tabular}{lccc}
\toprule
Method & BRISQUE p95 $\downarrow$ & KID $\times 10^3$ $\downarrow$ & CLIP $\times 100$ $\uparrow$ \\
\midrule
PP-Mark$^{\dagger}$ & 34.14 & $1.101 \pm 0.202$ & 31.64 \\
Gaussian Shading & 27.35 & $0.621 \pm 1.096$ & 32.42 \\
HiDDeN & 105.86 & $7.239 \pm 1.474$ & 31.30 \\
InvisMark & 31.36 & $-0.900 \pm 0.895$ & 32.53 \\
RingID & 34.64 & $-0.989 \pm 0.701$ & 32.41 \\
Stable Signature & 32.69 & $-0.814 \pm 0.878$ & 32.50 \\
Tree-Ring & 41.90 & $-0.090 \pm 0.922$ & 32.30 \\
WIND & 27.42 & $-0.386 \pm 0.879$ & 32.28 \\
TrustMark & 29.44 & $-0.174 \pm 1.001$ & 32.54 \\
\bottomrule
\end{tabular}
\par}
\vspace{0.3em}
{\small\raggedright
$^{\dagger}$PP-Mark retains the submitted BRISQUE ($n{=}100$, 25 prompts $\times$ 4 seeds) and KID ($n{=}1000$, subset 500, 20 splits). Baseline BRISQUE/KID and all CLIP results use the common 100 prompt--seed pairs. Baseline KID uses a disjoint pool of 900 clean reference images; its SD is across 1,000 draws of 100 reference images, with the 100 method images fixed.
\par}
\end{table}

Small negative unbiased KID estimates can occur at finite sample sizes and do not denote negative distances. PP-Mark's submitted KID remains low, but the differing sampling protocols do not support fine cross-method KID rankings. PP-Mark does not attain the lowest BRISQUE or highest CLIP similarity; this quality profile accompanies the forgery-resistance results in Sec.~\ref{sec:imprint-forgery}.

\newpage
\section{Limitations and Discussions}\label{Additional Limitations and Discussions}

\textbf{Generative-model scope.}\enspace
Our experiments cover SD2.1 and SDXL, which share a latent-diffusion/DDIM-inversion interface. The proof circuit verifies the committed embedding trace without re-simulating the generator, while end-to-end score verification requires a compatible initial-latent embedding and recovery interface. Generalization to substantially different generative architectures remains untested.

\textbf{ZKP Overhead.}\enspace
The latest rebuttal implementation (SP1 v6.3.1, $N=1000$) produces a 2.81\,MB Core receipt and reports approximately 17.0\,s projected fresh-process receipt issuance (Table~\ref{tab:proof-backend-comparison}). The original $N$ sweeps are retained in Tables~\ref{tab:sp1-k-sweep-timing} and~\ref{tab:sp1-receipt-size}; Table~\ref{tab:sp1-k-sweep-timing} also gives a conditional, unmeasured estimate of standalone receipt verification.
This is tolerable for archival or on-request verification but costly for per-upload streaming pipelines; a STARK-to-SNARK wrapper or proof aggregation are natural reductions that we leave to future work.

\textbf{Random Crop on SDXL.}\enspace
Across geometric transformations, random crop with resize on SDXL is the single transformation where detection drops meaningfully (0.74 on SDXL vs.\ 0.87 on SD\,2.1 under the same two-stage alignment; see Table~\ref{tab:image-transforms}).
The $128{\times}128$ latent of SDXL, four times the dimensionality of SD\,2.1, amplifies sub-pixel misalignment introduced by cropping and resampling, so residual near-miss scores cluster just below $\tau$.
Refining the sub-pixel alignment grid or incorporating a learned alignment model that undoes VAE-crop interactions would close this remaining gap; we leave this to future work.

\textbf{Reliance on Explicit Proof Artifacts.}\enspace
Traditional diffusion watermarks delegate trust to a proprietary verifier; PP-Mark removes this centralized dependency to enable public auditing, but in exchange requires an explicit proof artifact to accompany the image.
If this artifact is detached, deleted, or corrupted, the image reverts to an unverified state, a challenge inherent to certificate-based provenance frameworks.
PP-Mark\textsubscript{accept} is designed to certify bit-exact originals; it correctly rejects modified images, limiting its applicability to scenarios where the original is preserved (e.g., legal evidence, regulatory audit).
For general distribution environments where images undergo lossy transforms, PP-Mark\textsubscript{score} provides statistical verification via soft correlation over the recovered latent, retaining high detection after common transforms (e.g., JPEG $q{=}25$: 94\%, blur: 100\%; see Table~\ref{tab:image-transforms}).
Future work should explore anchoring proof artifacts in content-addressed decentralized storage and on-chain registries to ensure persistence independent of any single server.

\textbf{Key Lifecycle Operational Guidance.}\enspace
PP-Mark's unforgeability is contingent on the producer's binding key $k$ remaining secret; if $k$ leaks, the adversary can mint valid receipts under the victim's identity. We treat this as an operational security requirement and recommend the following lifecycle, consistent with standard PKI practice: (i)~\emph{rotation}, each producer maintains a small set of active binding keys $\{k^{(t)}\}$ indexed by a rotation epoch $t$; metadata includes the current key identifier, so verifiers can tell which epoch a receipt belongs to; (ii)~\emph{revocation}, a compromised $k^{(t)}$ is published on a public revocation list (analogous to CRL/OCSP), and verifiers reject receipts bound to revoked epochs; (iii)~\emph{forward binding}, because $b$ only depends on the \emph{context} at generation time, rotated keys do not invalidate legitimate historical receipts as long as the corresponding epoch remains non-revoked. Integration with C2PA Content Credentials is straightforward: the epoch ID, rotation policy, and revocation URL can be carried as standard assertion fields.

\newpage
\input{checklist.tex}

\end{document}

%% file: checklist.tex
\section*{NeurIPS Paper Checklist}

\begin{enumerate}

\item {\bf Claims}
    \item[] Question: Do the main claims made in the abstract and introduction accurately reflect the paper's contributions and scope?
    \item[] Answer: \answerYes{}
    \item[] Justification: Contributions are stated in the abstract and Introduction. The conditional security scope is specified in Section~\ref{sec:formal-security} and Appendix~\ref{app:full-proof}; evaluated settings are reported in Sections~\ref{sec:sig-vs-zkp}--\ref{sec:white-box}.
    \item[] Guidelines:
    \begin{itemize}
        \item The answer \answerNA{} means that the abstract and introduction do not include the claims made in the paper.
        \item The abstract and/or introduction should clearly state the claims made, including the contributions made in the paper and important assumptions and limitations. A \answerNo{} or \answerNA{} answer to this question will not be perceived well by the reviewers. 
        \item The claims made should match theoretical and experimental results, and reflect how much the results can be expected to generalize to other settings. 
        \item It is fine to include aspirational goals as motivation as long as it is clear that these goals are not attained by the paper. 
    \end{itemize}

\item {\bf Limitations}
    \item[] Question: Does the paper discuss the limitations of the work performed by the authors?
    \item[] Answer: \answerYes{}
    \item[] Justification: Discussed in Appendix~\ref{Additional Limitations and Discussions}.
    \item[] Guidelines:
    \begin{itemize}
        \item The answer \answerNA{} means that the paper has no limitation while the answer \answerNo{} means that the paper has limitations, but those are not discussed in the paper. 
        \item The authors are encouraged to create a separate ``Limitations'' section in their paper.
        \item The paper should point out any strong assumptions and how robust the results are to violations of these assumptions (e.g., independence assumptions, noiseless settings, model well-specification, asymptotic approximations only holding locally). The authors should reflect on how these assumptions might be violated in practice and what the implications would be.
        \item The authors should reflect on the scope of the claims made, e.g., if the approach was only tested on a few datasets or with a few runs. In general, empirical results often depend on implicit assumptions, which should be articulated.
        \item The authors should reflect on the factors that influence the performance of the approach. For example, a facial recognition algorithm may perform poorly when image resolution is low or images are taken in low lighting. Or a speech-to-text system might not be used reliably to provide closed captions for online lectures because it fails to handle technical jargon.
        \item The authors should discuss the computational efficiency of the proposed algorithms and how they scale with dataset size.
        \item If applicable, the authors should discuss possible limitations of their approach to address problems of privacy and fairness.
        \item While the authors might fear that complete honesty about limitations might be used by reviewers as grounds for rejection, a worse outcome might be that reviewers discover limitations that aren't acknowledged in the paper. The authors should use their best judgment and recognize that individual actions in favor of transparency play an important role in developing norms that preserve the integrity of the community. Reviewers will be specifically instructed to not penalize honesty concerning limitations.
    \end{itemize}

\item {\bf Theory assumptions and proofs}
    \item[] Question: For each theoretical result, does the paper provide the full set of assumptions and a complete (and correct) proof?
    \item[] Answer: \answerYes{}
    \item[] Justification: Assumption~\ref{assum:noise} is stated in Section~\ref{sec:verification} and empirically assessed in Appendix~\ref{app:assumption-verify}. Assumption~\ref{asm:crypto} and the fixed-candidate sampling conditions accompany Theorem~\ref{thm:unforgeability} in Section~\ref{sec:formal-security}; Appendix~\ref{app:full-proof} gives the corresponding proof and scope.
    \item[] Guidelines:
    \begin{itemize}
        \item The answer \answerNA{} means that the paper does not include theoretical results. 
        \item All the theorems, formulas, and proofs in the paper should be numbered and cross-referenced.
        \item All assumptions should be clearly stated or referenced in the statement of any theorems.
        \item The proofs can either appear in the main paper or the supplemental material, but if they appear in the supplemental material, the authors are encouraged to provide a short proof sketch to provide intuition. 
        \item Inversely, any informal proof provided in the core of the paper should be complemented by formal proofs provided in appendix or supplemental material.
        \item Theorems and Lemmas that the proof relies upon should be properly referenced. 
    \end{itemize}

    \item {\bf Experimental result reproducibility}
    \item[] Question: Does the paper fully disclose all the information needed to reproduce the main experimental results of the paper to the extent that it affects the main claims and/or conclusions of the paper (regardless of whether the code and data are provided or not)?
    \item[] Answer: \answerYes{}
    \item[] Justification: Generation, calibration, and evaluation settings are described in Section~4 and the experimental appendices. The GitHub release will provide code, configurations, and reproduction instructions identifying available inputs, external dependencies, and limitations of the retained historical data.
    \item[] Guidelines:
    \begin{itemize}
        \item The answer \answerNA{} means that the paper does not include experiments.
        \item If the paper includes experiments, a \answerNo{} answer to this question will not be perceived well by the reviewers: Making the paper reproducible is important, regardless of whether the code and data are provided or not.
        \item If the contribution is a dataset and\slash or model, the authors should describe the steps taken to make their results reproducible or verifiable. 
        \item Depending on the contribution, reproducibility can be accomplished in various ways. For example, if the contribution is a novel architecture, describing the architecture fully might suffice, or if the contribution is a specific model and empirical evaluation, it may be necessary to either make it possible for others to replicate the model with the same dataset, or provide access to the model. In general. releasing code and data is often one good way to accomplish this, but reproducibility can also be provided via detailed instructions for how to replicate the results, access to a hosted model (e.g., in the case of a large language model), releasing of a model checkpoint, or other means that are appropriate to the research performed.
        \item While NeurIPS does not require releasing code, the conference does require all submissions to provide some reasonable avenue for reproducibility, which may depend on the nature of the contribution. For example
        \begin{enumerate}
            \item If the contribution is primarily a new algorithm, the paper should make it clear how to reproduce that algorithm.
            \item If the contribution is primarily a new model architecture, the paper should describe the architecture clearly and fully.
            \item If the contribution is a new model (e.g., a large language model), then there should either be a way to access this model for reproducing the results or a way to reproduce the model (e.g., with an open-source dataset or instructions for how to construct the dataset).
            \item We recognize that reproducibility may be tricky in some cases, in which case authors are welcome to describe the particular way they provide for reproducibility. In the case of closed-source models, it may be that access to the model is limited in some way (e.g., to registered users), but it should be possible for other researchers to have some path to reproducing or verifying the results.
        \end{enumerate}
    \end{itemize}

\item {\bf Open access to data and code}
    \item[] Question: Does the paper provide open access to the data and code, with sufficient instructions to faithfully reproduce the main experimental results, as described in supplemental material?
    \item[] Answer: \answerYes{}
    \item[] Justification: Code was supplied as supplementary material; the updated code and reproduction instructions will be released on GitHub for the camera-ready version.
    \item[] Guidelines:
    \begin{itemize}
        \item The answer \answerNA{} means that paper does not include experiments requiring code.
        \item Please see the NeurIPS code and data submission guidelines (\url{https://neurips.cc/public/guides/CodeSubmissionPolicy}) for more details.
        \item While we encourage the release of code and data, we understand that this might not be possible, so \answerNo{} is an acceptable answer. Papers cannot be rejected simply for not including code, unless this is central to the contribution (e.g., for a new open-source benchmark).
        \item The instructions should contain the exact command and environment needed to run to reproduce the results. See the NeurIPS code and data submission guidelines (\url{https://neurips.cc/public/guides/CodeSubmissionPolicy}) for more details.
        \item The authors should provide instructions on data access and preparation, including how to access the raw data, preprocessed data, intermediate data, and generated data, etc.
        \item The authors should provide scripts to reproduce all experimental results for the new proposed method and baselines. If only a subset of experiments are reproducible, they should state which ones are omitted from the script and why.
        \item At submission time, to preserve anonymity, the authors should release anonymized versions (if applicable).
        \item Providing as much information as possible in supplemental material (appended to the paper) is recommended, but including URLs to data and code is permitted.
    \end{itemize}

\item {\bf Experimental setting/details}
    \item[] Question: Does the paper specify all the training and test details (e.g., data splits, hyperparameters, how they were chosen, type of optimizer) necessary to understand the results?
    \item[] Answer: \answerYes{}
    \item[] Justification: PP-Mark uses no learned parameters; calibration and per-attack settings are documented in Section~4 and the appendices.
    \item[] Guidelines:
    \begin{itemize}
        \item The answer \answerNA{} means that the paper does not include experiments.
        \item The experimental setting should be presented in the core of the paper to a level of detail that is necessary to appreciate the results and make sense of them.
        \item The full details can be provided either with the code, in appendix, or as supplemental material.
    \end{itemize}

\item {\bf Experiment statistical significance}
    \item[] Question: Does the paper report error bars suitably and correctly defined or other appropriate information about the statistical significance of the experiments?
    \item[] Answer: \answerYes{}
    \item[] Justification: FAR confidence intervals use the reported sample counts: individual settings use $n=100$, while Table~\ref{tab:transfer-summary} pools 100 pairs over three step budgets ($300$ outcomes per source). Cross-method comparisons there use one-sided Fisher exact tests; Appendix~\ref{app:quality} specifies the sampling protocols and SDs for KID.
    \item[] Guidelines:
    \begin{itemize}
        \item The answer \answerNA{} means that the paper does not include experiments.
        \item The authors should answer \answerYes{} if the results are accompanied by error bars, confidence intervals, or statistical significance tests, at least for the experiments that support the main claims of the paper.
        \item The factors of variability that the error bars are capturing should be clearly stated (for example, train/test split, initialization, random drawing of some parameter, or overall run with given experimental conditions).
        \item The method for calculating the error bars should be explained (closed form formula, call to a library function, bootstrap, etc.)
        \item The assumptions made should be given (e.g., Normally distributed errors).
        \item It should be clear whether the error bar is the standard deviation or the standard error of the mean.
        \item It is OK to report 1-sigma error bars, but one should state it. The authors should preferably report a 2-sigma error bar than state that they have a 96\% CI, if the hypothesis of Normality of errors is not verified.
        \item For asymmetric distributions, the authors should be careful not to show in tables or figures symmetric error bars that would yield results that are out of range (e.g., negative error rates).
        \item If error bars are reported in tables or plots, the authors should explain in the text how they were calculated and reference the corresponding figures or tables in the text.
    \end{itemize}

\item {\bf Experiments compute resources}
    \item[] Question: For each experiment, does the paper provide sufficient information on the computer resources (type of compute workers, memory, time of execution) needed to reproduce the experiments?
    \item[] Answer: \answerYes{}
    \item[] Justification: Submitted hardware is described in Appendix~\ref{app:compute}; per-method runtimes are in Appendix~\ref{app:runtime}. The original SP1 timing sweep is in Appendix~\ref{app:sample-size} (Table~\ref{tab:sp1-k-sweep-timing}); sizes and the rebuttal backend comparison are in Appendix~\ref{app:proof-size}. Updated rows distinguish projected issuance time, estimated receipt-verification time, and measured artifact size, and identify unreported quantities.
    \item[] Guidelines:
    \begin{itemize}
        \item The answer \answerNA{} means that the paper does not include experiments.
        \item The paper should indicate the type of compute workers CPU or GPU, internal cluster, or cloud provider, including relevant memory and storage.
        \item The paper should provide the amount of compute required for each of the individual experimental runs as well as estimate the total compute. 
        \item The paper should disclose whether the full research project required more compute than the experiments reported in the paper (e.g., preliminary or failed experiments that didn't make it into the paper). 
    \end{itemize}
    
\item {\bf Code of ethics}
    \item[] Question: Does the research conducted in the paper conform, in every respect, with the NeurIPS Code of Ethics \url{https://neurips.cc/public/EthicsGuidelines}?
    \item[] Answer: \answerYes{}
    \item[] Justification: Conforms with the NeurIPS Code of Ethics; no human subjects or sensitive data.
    \item[] Guidelines:
    \begin{itemize}
        \item The answer \answerNA{} means that the authors have not reviewed the NeurIPS Code of Ethics.
        \item If the authors answer \answerNo, they should explain the special circumstances that require a deviation from the Code of Ethics.
        \item The authors should make sure to preserve anonymity (e.g., if there is a special consideration due to laws or regulations in their jurisdiction).
    \end{itemize}

\item {\bf Broader impacts}
    \item[] Question: Does the paper discuss both potential positive societal impacts and negative societal impacts of the work performed?
    \item[] Answer: \answerYes{}
    \item[] Justification: Discussed in Section~6.
    \item[] Guidelines:
    \begin{itemize}
        \item The answer \answerNA{} means that there is no societal impact of the work performed.
        \item If the authors answer \answerNA{} or \answerNo, they should explain why their work has no societal impact or why the paper does not address societal impact.
        \item Examples of negative societal impacts include potential malicious or unintended uses (e.g., disinformation, generating fake profiles, surveillance), fairness considerations (e.g., deployment of technologies that could make decisions that unfairly impact specific groups), privacy considerations, and security considerations.
        \item The conference expects that many papers will be foundational research and not tied to particular applications, let alone deployments. However, if there is a direct path to any negative applications, the authors should point it out. For example, it is legitimate to point out that an improvement in the quality of generative models could be used to generate Deepfakes for disinformation. On the other hand, it is not needed to point out that a generic algorithm for optimizing neural networks could enable people to train models that generate Deepfakes faster.
        \item The authors should consider possible harms that could arise when the technology is being used as intended and functioning correctly, harms that could arise when the technology is being used as intended but gives incorrect results, and harms following from (intentional or unintentional) misuse of the technology.
        \item If there are negative societal impacts, the authors could also discuss possible mitigation strategies (e.g., gated release of models, providing defenses in addition to attacks, mechanisms for monitoring misuse, mechanisms to monitor how a system learns from feedback over time, improving the efficiency and accessibility of ML).
    \end{itemize}
    
\item {\bf Safeguards}
    \item[] Question: Does the paper describe safeguards that have been put in place for responsible release of data or models that have a high risk for misuse (e.g., pre-trained language models, image generators, or scraped datasets)?
    \item[] Answer: \answerNA{}
    \item[] Justification: No new generative models, datasets, or scraped data are released.
    \item[] Guidelines:
    \begin{itemize}
        \item The answer \answerNA{} means that the paper poses no such risks.
        \item Released models that have a high risk for misuse or dual-use should be released with necessary safeguards to allow for controlled use of the model, for example by requiring that users adhere to usage guidelines or restrictions to access the model or implementing safety filters. 
        \item Datasets that have been scraped from the Internet could pose safety risks. The authors should describe how they avoided releasing unsafe images.
        \item We recognize that providing effective safeguards is challenging, and many papers do not require this, but we encourage authors to take this into account and make a best faith effort.
    \end{itemize}

\item {\bf Licenses for existing assets}
    \item[] Question: Are the creators or original owners of assets (e.g., code, data, models), used in the paper, properly credited and are the license and terms of use explicitly mentioned and properly respected?
    \item[] Answer: \answerYes{}
    \item[] Justification: All baselines and the SP1 zkVM are cited; pretrained Stable Diffusion 2.1 and SDXL 1.0 weights are used under their public Hugging Face licenses.
    \item[] Guidelines:
    \begin{itemize}
        \item The answer \answerNA{} means that the paper does not use existing assets.
        \item The authors should cite the original paper that produced the code package or dataset.
        \item The authors should state which version of the asset is used and, if possible, include a URL.
        \item The name of the license (e.g., CC-BY 4.0) should be included for each asset.
        \item For scraped data from a particular source (e.g., website), the copyright and terms of service of that source should be provided.
        \item If assets are released, the license, copyright information, and terms of use in the package should be provided. For popular datasets, \url{paperswithcode.com/datasets} has curated licenses for some datasets. Their licensing guide can help determine the license of a dataset.
        \item For existing datasets that are re-packaged, both the original license and the license of the derived asset (if it has changed) should be provided.
        \item If this information is not available online, the authors are encouraged to reach out to the asset's creators.
    \end{itemize}

\item {\bf New assets}
    \item[] Question: Are new assets introduced in the paper well documented and is the documentation provided alongside the assets?
    \item[] Answer: \answerYes{}
    \item[] Justification: The updated PP-Mark codebase, documentation, configurations, and reproduction instructions are being prepared for the camera-ready GitHub release.
    \item[] Guidelines:
    \begin{itemize}
        \item The answer \answerNA{} means that the paper does not release new assets.
        \item Researchers should communicate the details of the dataset\slash code\slash model as part of their submissions via structured templates. This includes details about training, license, limitations, etc. 
        \item The paper should discuss whether and how consent was obtained from people whose asset is used.
        \item At submission time, remember to anonymize your assets (if applicable). You can either create an anonymized URL or include an anonymized zip file.
    \end{itemize}

\item {\bf Crowdsourcing and research with human subjects}
    \item[] Question: For crowdsourcing experiments and research with human subjects, does the paper include the full text of instructions given to participants and screenshots, if applicable, as well as details about compensation (if any)?
    \item[] Answer: \answerNA{}
    \item[] Justification: No crowdsourcing or human subjects.
    \item[] Guidelines:
    \begin{itemize}
        \item The answer \answerNA{} means that the paper does not involve crowdsourcing nor research with human subjects.
        \item Including this information in the supplemental material is fine, but if the main contribution of the paper involves human subjects, then as much detail as possible should be included in the main paper. 
        \item According to the NeurIPS Code of Ethics, workers involved in data collection, curation, or other labor should be paid at least the minimum wage in the country of the data collector. 
    \end{itemize}

\item {\bf Institutional review board (IRB) approvals or equivalent for research with human subjects}
    \item[] Question: Does the paper describe potential risks incurred by study participants, whether such risks were disclosed to the subjects, and whether Institutional Review Board (IRB) approvals (or an equivalent approval/review based on the requirements of your country or institution) were obtained?
    \item[] Answer: \answerNA{}
    \item[] Justification: No human subjects.
    \item[] Guidelines:
    \begin{itemize}
        \item The answer \answerNA{} means that the paper does not involve crowdsourcing nor research with human subjects.
        \item Depending on the country in which research is conducted, IRB approval (or equivalent) may be required for any human subjects research. If you obtained IRB approval, you should clearly state this in the paper. 
        \item We recognize that the procedures for this may vary significantly between institutions and locations, and we expect authors to adhere to the NeurIPS Code of Ethics and the guidelines for their institution. 
        \item For initial submissions, do not include any information that would break anonymity (if applicable), such as the institution conducting the review.
    \end{itemize}

\item {\bf Declaration of LLM usage}
    \item[] Question: Does the paper describe the usage of LLMs if it is an important, original, or non-standard component of the core methods in this research? Note that if the LLM is used only for writing, editing, or formatting purposes and does \emph{not} impact the core methodology, scientific rigor, or originality of the research, declaration is not required.
    \item[] Answer: \answerNA{}
    \item[] Justification: Core methodology does not use LLMs.
    \item[] Guidelines:
    \begin{itemize}
        \item The answer \answerNA{} means that the core method development in this research does not involve LLMs as any important, original, or non-standard components.
        \item Please refer to our LLM policy in the NeurIPS handbook for what should or should not be described.
    \end{itemize}

\end{enumerate}